\documentclass[11pt]{article}
\usepackage[dvipsnames]{xcolor}
\usepackage{mathpazo}
\usepackage{fullpage}
\usepackage{enumitem}
\usepackage{pras}

\usepackage{booktabs} 
\usepackage{array}  

\usepackage[showdeletions]{color-edits}
\addauthor{pras}{violet}
\addauthor{jh}{Green}

\newtheorem{Alg}{Algorithm}
\newcommand{\myalg}[4][0cm]{
\medskip
\small{
\fbox{
\parbox{5.2in}{\vspace{#1}
\begin{Alg}\label{#2}{\textsl{ #3}}
\vspace{.1cm}\\ \emph{ #4}
\end{Alg}
}}
\medskip
}}

\newcommand{\cost}{\mathrm{cost}}
\newcommand{\ctopk}{\mathrm{cost}_{k}}

\newcommand{\distortion}{\mathrm{distortion}}
\newcommand{\fairness}{\mathrm{fairness}}

\newcommand{\Mopt}{M^*}
\newcommand{\itemcolor}{black}
\newcommand{\agentcolor}{magenta}

\begin{document}

\title{
Fair metric distortion for matching with preferences
\vspace{15pt}}

\author{Jabari Hastings \\ \hspace{0pt}{Stanford University}\\\hspace{0pt}{\texttt{jabarih@stanford.edu}}
\and Prasanna Ramakrishnan\\ \hspace{0pt}{Stanford University}  \\ \hspace{0pt}{\texttt{pras1712@stanford.edu}}
}

\date{\vspace{15pt}\small{\today}}

\maketitle

\begin{abstract} 
We consider the matching problem in the \emph{metric distortion} framework.
There are $n$ agents and $n$ items occupying points in a shared metric space, and the goal is to design a matching mechanism that outputs a low-cost matching between the agents and items, using only agents' ordinal rankings of the candidates by distance. A mechanism has \emph{distortion} $\alpha$ if it always outputs a matching whose cost is within a factor of $\alpha$ of the optimum, in every instance regardless of the metric space. 

Typically, the cost of a matching is measured in terms of the \emph{total} distance between matched agents and items, but this measure can incentivize unfair outcomes where a handful of agents bear the brunt of the cost. With this in mind, we consider how the metric distortion problem changes when the cost is instead measured in terms of the \emph{maximum} cost of any agent. 
We show that while these two notions of distortion can in general differ by a factor of $n$, the distortion of a variant of the state-of-the-art mechanism, \emph{RepMatch}, actually improves from $O(n^2)$ under the sum objective to $O(n^{1.58})$ under the max objective. We also show that for any fairness objective defined by a monotone symmetric norm, this algorithm guarantees distortion $O(n^2)$.

\end{abstract}

\thispagestyle{empty}
\newpage 

\section{Introduction}
Matching is a central problem in social choice theory that underpins a wide array of resource allocation scenarios. Just within education, centralized systems match students to schools, dorms, and courses; faculty and TAs to teaching assignments; and courses to classrooms. While the quality of an outcome is often evaluated in terms of the participants' \emph{cardinal} utilities (or costs) in the sense of von Neumann and Morgenstern \cite{von1947theory}, it can often be difficult for agents to quantify their utility and express cardinal preferences. In light of this, a variety of widely adopted mechanisms (such as \emph{Random Serial Dictatorship}, the \emph{Boston mechanism} \cite{AS03}, and \emph{Deferred Acceptance}) operate based on \emph{ordinal} preferences, which only require agents to compare one option against another. 

Within social choice, one particularly fruitful approach towards reconciling the disconnect between cardinal utilities and ordinal preferences is the \emph{distortion} framework, introduced by the seminal work of  Procaccia and Rosenschein \cite{DBLP:conf/cia/ProcacciaR06} (see \cite{DBLP:conf/ijcai/AnshelevichF0V21} for a detailed survey). Under the this framework, a mechanism is evaluated in terms of how well it approximates the optimal outcome, in the worst case over all possible ordinal preferences, and all cardinal utilities consistent with these preferences. Perhaps unsurprisingly, every mechanism can be arbitrarily worse than the optimal solution without any limitations on the utilities. However, by adding relatively mild constraints, it becomes tractable to find mechanisms with strong, meaningful guarantees.

Along these lines, substantial research has focused on the \emph{metric distortion} problem, originally introduced by Anshelevich, Bhardwaj, and Postl \cite{ABP15} in the context of voting. 
In this model, we imagine that the various voters and candidates  occupy points in an unknown metric space, and each voter's cost of a candidate is the distance between them (i.e., voters prefer candidates that are close). The objective is to find mechanisms (\emph{voting rules}) that, using only each voter's ordinal ranking of candidates by distance, always choose candidates whose \emph{total} cost is only worse than the best possible by a small factor (called the \emph{distortion}). 
Though the metric distortion model was originally motivated by well-studied spatial models of voting \cite{enelow1984spatial,enelow1990advances,merrill1999unified,armstrong2020analyzing}, the power of the model comes not from a conviction that preferences are truly represented by some metric, but rather from the worst-case guarantee that a low-distortion mechanism must find a solution that is near-optimal simultaneously in every metric space. Despite the challenging nature of this task, a number of voting rules can achieve distortion guarantees that are small constants, and a long line of work managed to pin down the optimal distortion of deterministic rules at just 3 \cite{ABP15, Goel:2017aa,MW19, Kem20b, GHS20,KK22,DBLP:conf/sigecom/Kizilkaya023}. Through this work, the metric distortion model has proven to be a powerful lens not just to gain a better understanding of existing rules (as in \cite{ABP15}), but also to motivate the design of natural new voting rules (as in \cite{MW19,KK22,DBLP:conf/sigecom/Kizilkaya023,Charikar:2024aa}). Metric distortion has even played a role as a kind of sandbox to understand the power of randomization \cite{DBLP:journals/jair/AnshelevichP17,DBLP:conf/aaai/FainGMP19,Kem20a,DBLP:conf/soda/CharikarR22,Charikar:2024aa}, and entirely different paradigms of voting and preference elicitation \cite{DBLP:conf/aaai/GrossAX17,DBLP:conf/sigecom/ChengDK17,DBLP:conf/aaai/ChengDK18,DBLP:journals/ai/CaragiannisSV22,Goyal:2024aa, Goel:2025aa}. 

Even though matching is arguably just as ubiquitous a social choice problem as voting, it is significantly less understood within the metric distortion problem. In this setting, instead of voters and candidates we have $n$ agents and $n$ items, and rather than choosing a single winning candidate, the goal is to match the agents to items. Once again, agents' costs of items are derived from a metric space. The question is whether matching mechanisms can, with only the agent's ordinal preferences over items, choose a matching whose total cost is close to the minimum possible cost. 

The initial work of Caragiannis et al. \cite{CFRF+16} focused on the metric distortion of \emph{Serial Dictatorship} mechanisms, which iterate over the agents and match them to their favorite unmatched item. They showed that with a deterministically chosen order, this mechanism can have  distortion $2^n - 1$, but with a uniformly random chosen order (\emph{Random Serial Dictatorship}), its distortion is at most $O(n)$ in expectation. This was followed by the work of Anari, Charikar, and Ramakrishnan \cite{Anari:2023aa}, who designed a simple and novel deterministic matching mechanism with distortion at most $O(n^2)$ called \emph{RepMatch}. They also showed that any mechanism, even with randomization, must have worst-case distortion at least $\Omega(\log n)$, dashing hopes of constant distortion mechanisms like in the voting setting. Despite the exponential gap between the best lower and upper bounds, further progress on the problem has been elusive.
A recent work of Filos-Ratsikas et al. \cite{Filos-Ratsikas:2025aa} has instead studied the distortion in simple metric spaces like the real line,  but our understanding of the distortion in general metric spaces is still fairly limited.  

While it remains reasonable to hope that resolving the metric distortion problem for matching will continue to lead to the discovery of natural new matching mechanisms, a fair critique is that this work takes for granted that the ``optimal'' choice is the one that minimizes the total social cost. For example, if the goal is to design a school choice mechanism such that students are matched to nearby schools, it would be difficult to argue that a matching that reduces a dozen students' commute by 5 minutes but increases one student's commute by an hour is just as good. From a fairness perspective, it might be more natural to aim to minimize the \emph{egalitarian} social cost, which is the maximum cost incurred by any agent rather than the \emph{utilitarian} social cost, which is the sum. 

As it turns out, in the voting setting this criticism is definitively addressed by \cite{Goel:2017aa,Goel:2018aa}. 
Goel, Hulett, and Krishnaswamy \cite{Goel:2018aa} show that any candidate that has distortion $\alpha$ with respect to the utilitarian objective (simply, \textit{sum-distortion} $\alpha$) must also have distortion at most $\alpha + 2$ with respect to the egalitarian objective (\textit{max-distortion} $\alpha + 2$). In fact, such a candidate has distortion at most $\alpha + 2$ with respect to any objective which is a monotone symmetric norm of the costs (called the \textit{fairness ratio}). (Here, the $\ell_1$ and $\ell_\infty$ norms would correspond to the utilitarian and egalitarian objectives respectively.) 
As a result, guaranteeing low distortion with respect to the sum objective automatically ensures low distortion across a broad family of fairness objectives, each differing in how much emphasis it places on the worst-off agents versus the group as a whole. 

Our central question is whether the same holds true in the matching setting.

\begin{question}\label{q:main}
Do matching mechanisms with low sum-distortion also have low max-distortion and fairness ratio?  
\end{question}

\subsection{Our Contributions} 
In this work, we make the first steps towards answering \Cref{q:main}. To begin with, we show using simple examples that low sum distortion \emph{does not} imply low max distortion (\Cref{prop:small-sum-large-max}) or vice versa (\Cref{prop:small-max-large-sum})--- the gap between the two can be a factor of $n$. This observation sits in stark contrast to the voting setting, and suggests that there is value in analyzing the metric distortion of matching mechanisms beyond just the sum objective.

Following along these lines, we consider the \emph{RepMatch} mechanism \cite{Anari:2023aa}, which has the state-of-the-art (deterministic) sum-distortion guarantee of $O(n^2)$. We analyze the max-distortion and fairness ratio of  RepMatch with a minor modification to its tie-breaking procedure (detailed in \Cref{alg:rep-match}) that simplifies the algorithm and its analysis.
Naively, the link  between distortion and the fairness ratio (given by \Cref{prop:distortion-fairness-bounds}) suggests that the max-distortion of this mechanism could be as large as $\Omega(n^3)$. However, we show that this is not the case.

\begin{theorem}\label{thm:main}
\Cref{alg:rep-match} guarantees a max-distortion $O(n^{1.58})$ and fairness ratio $O(n^2)$. 
\end{theorem}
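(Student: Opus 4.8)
The plan is to prove both bounds by induction on $n$, following the recursive structure of \Cref{alg:rep-match}: on an instance of size $n$ it carves the agents and items into sub-instances of size at most $\lceil n/2\rceil$ and returns a matching assembled from the matchings it computes recursively on them. Crucially, I would run the whole argument at the level of the \emph{cost vector} $\vec c(M)$---the list of the $n$ distances between agents and their matched items---rather than a scalar cost, so that a single induction handles the entire family of monotone symmetric norms at once and we never pay the factor-$n$ loss incurred by invoking \Cref{prop:distortion-fairness-bounds}. Throughout, $\|\cdot\|$ denotes an arbitrary monotone symmetric norm (the max-distortion is the case $\|\cdot\|=\|\cdot\|_\infty$, the sum-distortion the case $\|\cdot\|=\|\cdot\|_1$), and for an instance $\mathcal I$ we write $M^\star(\mathcal I)$ for a matching minimizing $\|\vec c(\cdot)\|$. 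Since $\|\cdot\|_\infty$ is itself a monotone symmetric norm, the $O(n^2)$ fairness-ratio bound already covers the max objective; the content of \Cref{thm:main} is that for that one objective one can do quantitatively better, and the two halves of the proof share nearly all of their machinery.

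The two facts about monotone symmetric norms that drive the recursion are: (i) deleting coordinates never increases $\|\cdot\|$; and (ii) $\|\vec x\sqcup\vec y\|\le\|\vec x\|+\|\vec y\|$ for a concatenation of two (zero-padded) cost vectors, with the sharpening $\|\vec x\sqcup\vec y\|_\infty=\max\{\|\vec x\|_\infty,\|\vec y\|_\infty\}$ in the max case. This $\ell_\infty$-specific sharpening---gluing together two recursively built cost vectors is lossless for the max norm but costs a factor $2$ for a generic norm---is ultimately the only reason the two exponents differ.

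The technical heart is a structural lemma comparing the optimum of a sub-instance produced by the algorithm to the optimum of the instance it was carved from: there is a matching $M^0$ of $\mathcal I$ that \emph{respects the partition} the algorithm imposes (it matches each side's agents to that side's items) and satisfies $\|\vec c(M^0)\|\le\beta\cdot\|\vec c(M^\star(\mathcal I))\|$ for \emph{every} monotone symmetric norm, where $\beta$ is an absolute constant (and a smaller constant suffices when only the inequality for $\|\cdot\|_\infty$ is needed). To prove it I would start from a global optimal matching, identify every agent it matches ``across'' the partition, and reroute each such agent onto an item on its own side by walking along the edges of the optimal matching; each rerouting is paid for by a bounded number of triangle inequalities, and the detours are charged back to optimal edge lengths using the preference invariant \Cref{alg:rep-match} guarantees for its partition (roughly, that each side's agents rank all of their own items above the other side's items). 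Because each rerouted agent's new cost is bounded \emph{agentwise} by a fixed-weight sum of a few optimal edge lengths, the resulting comparison holds simultaneously for all monotone symmetric norms. I expect essentially all of the difficulty to live here: arranging the rerouting so the charging is valid, keeping it agentwise rather than only in aggregate, and---for the max objective---squeezing the relevant constant down to exactly the value that produces the exponent $\log_2 3\approx 1.58$.

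Granting the lemma, the induction closes cleanly. Writing $F(n)$ for the fairness-ratio bound and $D(n)$ for the max-distortion bound, and combining $\vec c(M)=\vec c(M_1)\sqcup\vec c(M_2)$, Fact (ii), the inductive hypothesis at sizes $\le\lceil n/2\rceil$, Fact (i), and the lemma, one gets
\[
\|\vec c(M)\|\;\le\;\|\vec c(M_1)\|+\|\vec c(M_2)\|\;\le\;4\,F(\lceil n/2\rceil)\cdot\|\vec c(M^\star(\mathcal I))\|,
\]
so that $F(n)\le 4\,F(\lceil n/2\rceil)$, which unrolls over the $\lceil\log_2 n\rceil$ levels of the recursion (base case $F(1)=1$) to $F(n)=O(n^{\log_2 4})=O(n^2)$. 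Repeating the computation with the equality in Fact (ii) replacing the first inequality, and using the sharper constant in the lemma, gives $\|\vec c(M)\|_\infty\le 3\,D(\lceil n/2\rceil)\,\|\vec c(M^\star(\mathcal I))\|_\infty$, hence $D(n)=O(n^{\log_2 3})=O(n^{1.58})$. Together these establish the two claims of \Cref{thm:main}.
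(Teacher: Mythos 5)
The central issue is that your model of the algorithm's control flow is wrong. \Cref{alg:rep-match} is not a divide-and-conquer procedure that carves an instance of size $n$ into sub-instances of size at most $\lceil n/2 \rceil$ and recurses. It is agglomerative: it starts with $n$ singletons and repeatedly merges two existing clusters whenever their representatives' top choices overlap, with no control over the sizes of the two clusters being merged. At the end there is a single flat partition (which may be a single block of size $n$), and the agents in each block are matched to the representative's top-$|S_i|$ items \emph{all at once}, not via any further recursion. Because the two clusters merged in a step can have arbitrarily imbalanced sizes, a recurrence of the form $F(n)\le 4F(\lceil n/2\rceil)$ or $D(n)\le 3D(\lceil n/2\rceil)$ has no basis in the algorithm; the fact that it yields the right exponents is a coincidence of the balanced case being the extremal one, which is a fact that your argument never establishes.

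The paper's proof instead runs an invariant \emph{forward along the sequence of merges}. It assigns a weight $w(S)$ to each cluster: $w=1$ for singletons, and on merging $S_i$ (larger) with $S_j$ (smaller), $w(S_i\cup S_j)=2w(S_j)+w(S_i)$ (or $\max(2w(S_j),w(S_i))$ while the union still has size $\le k$). Note the asymmetry: the smaller side picks up a factor of $2$, the larger side does not. The invariant maintained is $d(a,r_i)\le -d(a,M^*_k(a))+w(S_i)\cdot\mathrm{cost}_k(M^*_k|_{S_i})$ for every $a\in S_i$, and the merge step pays with exactly one triangle-inequality detour through the item that $r_i$ and $r_j$ share. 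Separately, the weight growth is controlled by the elementary inequality $2x^c+y^c\le(x+y)^c$ for $c=\log_2 3$ and $0<x\le y$ (Claim \ref{clm:property-of-logartihms}), which gives $w(S)\le k(|S|/k)^{\log_2 3}$; it is this inequality, not any balancing of the recursion tree, that produces the $\log_2 3$ exponent. The paper also proves the interpolating top-$k$ bound $O(k^2(n/k)^{\log_2 3})$, whereas your plan only addresses the two endpoints $k=1$ and $k=n$.

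Your ``structural lemma'' --- a partition-respecting matching $M^0$ with $\|\vec c(M^0)\|\le\beta\|\vec c(M^\star)\|$ for all monotone symmetric norms and an absolute constant $\beta$ --- is both stronger than what the paper uses and unsupported here. The paper never constructs such a matching; it compares directly against the (possibly partition-crossing) restriction $M^*_k|_{S_i}$ and charges the representative's detour to $\mathrm{cost}_k(M^*_k|_{S_i})$, which can accumulate across merges. The constant you would need in your lemma cannot in fact be an absolute constant: the cost of walking an agent to the representative is exactly the quantity the weight $w(S_i)$ is bookkeeping, and it grows polynomially in $|S_i|$. So the lemma as stated would be false, not just hard to prove. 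To repair the plan you would essentially have to reintroduce the paper's weight function and invariant, at which point you recover the paper's proof rather than an alternative to it.
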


That is, we show that with a careful analysis, the distortion of RepMatch actually \emph{improves} for the max objective by a polynomial factor. Not only this, but we can generalize the analysis to any fairness objective that is a monotone symmetric norm. The sum-objective ends up being the worst case, despite the fact that the mechanism was specifically designed with that objective in mind. 

Qualitatively, we view \Cref{thm:main} in an optimistic light. While it is not possible to simply lift a sum-distortion guarantee to a max-distortion or fairness ratio guarantee, it seems plausible that these objectives are still aligned with each other in some sense. We also view max-distortion as another avenue for finding new preferential matching mechanisms, with the hope that they will also satisfy strong sum-distortion and fairness ratio guarantees. 

Finally, we complement these results with an exploration of lower bounds. We show that \Cref{alg:rep-match} can have sum-distortion (and therefore fairness ratio) as large as $\Omega(n^2)$, and max-distortion as large as $\Omega(n)$ (\Cref{thm:mod-rep-match-lb-topk}). We also show that every matching mechanism has max-distortion at least $7$ (\Cref{thm:max-distortion-lower-bounds}), using constructions similar to the binary tree metrics introduced by \cite{Anari:2023aa}. While \cite{Anari:2023aa} are able to get a lower bound of $\Omega(\log n)$ for sum-distortion using these examples, curiously we show that for max-distortion, they go no further than $7$ (\Cref{prop:ub-tree}). This fact opens up an intriguing possibility --- that there could be matching mechanisms that have constant max-distortion, as was possible in the voting setting. On the other hand, finding a mechanism with sum or max distortion just $o(n)$ remains a challenging open problem.

\subsection{Related work}
The problem of matching under incomplete information was introduced by \cite{HZ79} and has since been studied in a variety of contexts. Distortion in ordinal matching was first studied by
\cite{FRFZ14}. Instead of considering costs in a metric space, they consider \emph{unit-sum} and \emph{unit-range} valuations. Note that by focusing on valuations rather than costs, their setting becomes one of maximization. \cite{FRFZ14} showed that the best distortion attainable by a randomized mechanism in their setting is $O(\sqrt{n})$ and this is achieved by the Random Serial Dictatorship.
\cite{ABFRV22b} later showed that for unit-sum valuations, distortion $\Theta(n^2)$ is the best possible for deterministic mechanisms.

Another interesting setting permits unrestricted valuation functions, but gives the mechanism the power to query a small number of agent valuations.
\cite{ABFRV22b} showed that with $O(\lambda \log n)$ queries per agent, it is possible to get distortion $O(n^{1/(\lambda + 1)})$. Even with just two queries per agent, \cite{ABFRV22a} showed that one can get distortion $O(\sqrt{n})$. \cite{Ebadian:2025aa} ultimately showed that with $\lambda$ queries per agent, a distortion of $O(n^{1/\lambda})$ is attainable, which is optimal.

Our work is the first in the metric matching setting to study the fairness ratio within arbitrary metric spaces. We remark that the recent work of \cite{Filos-Ratsikas:2025aa} gives a mechanism with fairness ratio of 3 for line metrics, but the guarantees do not to extend to more general spaces. 

Metric assumptions have also been applied to maximum matchings, as opposed to our minimization setting. Here, the best distortion bounds are constant; \cite{AS16p,AS16q} gave a mechanism with distortion $1.6$ and a truthful one that with distortion $1.76$ (both of which are randomized). We also note that \cite{AZ21} study minimum matchings in a simpler setting, where the locations of the agents are unknown, but the locations of the items are known. 
With this additional information, they show that deterministic mechanisms can attain optimal distortion of $3$.

\section{Preliminaries}
\paragraph{Notation and terminology.} 
Throughout the paper we let $A = \{a_1, a_2, \dots, a_n\}$  denote the set of agents, and $B = \{b_1, b_2, \dots, b_n\}$ denote the set of items, where $n\geq 2$ is some integer. An \emph{ordinal matching problem instance}, denoted by $\sigma$, is given by $n$ \emph{preference lists}, one provided by each agent. Each preference list is a permutation of the items, given in non-decreasing order of distance. We say that agent $a_i$ \emph{prefers} item $b_j$ \emph{over} item $b_k$ if the item $b_j$ appears before $b_k$ in her preference list. Agent $a_i's$ \emph{favorite} item in a set $S \subseteq B$ is the item in $S$ that appears the earliest in her preference list.
Similarly, agent $a_i$'s favorite $t$ items are the first $t$ items in her preference list. 

\paragraph{Metric spaces.} The distances between agents and items are conveyed by an unknown \emph{pseudometric} \allowbreak $d:(A\cup B) \times (A\cup B) \to \mathbb{R}_{\geq 0}$. 
By definition, the function $d$ satisfies the following  properties.

\begin{enumerate}[label=(\roman*)]
	\item Identity of indiscernibles: $d(x, y) = 0$ if $x = y$, 

	\item Symmetry: $d(x, y) = d(y, x)$,

	\item Triangle inequality: $d(x, y) \leq d(x, z) + d(z, y)$.

\end{enumerate}
In certain definitions of metric spaces, it is required that different elements have positive distance (i.e. $d(x, y) = 0$ if and only if $x = y$), but we do not require this. We allow different agents or items to ``occupy'' the same point in the metric space, in which case their distance is $0$; however, if they are located at different points, they do indeed have positive distance between each other. We note that we can easily replace distances of $0$ with sufficiently small $\epsilon$'s without changing any results.

We say that a metric $d$ is \emph{consistent} with an instance $\sigma$ if for every agent $a_i$, whenever that agent  $a_i$ prefers an item $b_j$ over item $b_k$, then  $d(a_i, b_j) \leq d(a_i, b_k)$. We let $\rho(\sigma)$ denote the set of all metrics consistent with the problem instance.

\paragraph{Matchings.} We will use injective functions $M:A \to A \cup B$ to represent matchings. For each agent $a_i$, we can have  $M(a_i) \in B$ when the agent $a_i$ is assigned an item, and $M(a_i) = a_i$ otherwise.
When $M$ is a bijection between $A$ and $B$, we say that it is a \emph{perfect matching}.
For any set of agents $S \subseteq A$ and matching $M$, we will let $M|_S : S \to A \cup B$ denote the \emph{restriction} of the matching $M$ to the set $S$. We assume that $M|_S$ is the identity map on the set $A \setminus S$.

\paragraph{Cost of Matchings.}
Given a \emph{norm} $f: \mathbb R^n \to \mathbb R$, the cost of a matching $M$ under a metric $d$ is 
\begin{equation*}
\cost_f(M, d) \coloneqq f \left ( d(a_1, M(a_1)), d(a_2, M(a_2)), \dots, d(a_n, M(a_n)) \right ).
\end{equation*}
When the context is clear, we will simplify our notation and write $\cost_f(M)$ instead of $\cost_f(M, d)$
With a slight abuse of notation, we  also define the top-$k$ cost of a matching $M$ to be the sum of the $k$ largest distances between agents and their assigned items,
\begin{equation*}
\ctopk(M, d) \coloneqq \max_{S \subseteq A, |S| = k} \sum_{a_i \in S} d(a_i, M(a_i)). 
\end{equation*}
Notice that the functions $\cost_1(M, d), \cost_n(M, d)$ compute the maximum distance and sum of distances respectively.  

\paragraph{Distortion.}
Given a norm $f$ and metric $d$, the \emph{$f$-distortion} of a perfect matching $M$ is the ratio of that matching's  cost compared to that of the optimal matching,
\begin{equation*}
\distortion_f(M, d) := \frac{\cost_f(M, d)}{\min_{\Mopt} \cost_f(\Mopt, d)}.
\end{equation*}
We will refer to the top-$1$ distortion top-$n$ distortion  as the max-distortion and sum-distortion respectively.
The worst-case $f$-distortion of a matching is given by  
\begin{equation*}
\distortion_f(M, \sigma) := \sup_{d \in \rho(\sigma)} \distortion_f(M, d).
\end{equation*}
With a slight abuse of notation, the top-$k$ distortion is defined in a similar fashion.

\paragraph{Fairness Ratio.}
Borrowing the terminology introduced by \cite{Goel:2018aa}, we also define the \emph{fairness ratio} of a perfect matching $M$ to be maximum worst-case top-$k$-distortion
\begin{equation*}
    \fairness(M, d) := \max_{1 \leq k \leq n} \distortion_k(M, d).
\end{equation*}
The worst-case fairness ratio of a matching is given by 
\begin{equation*}
\fairness(M, \sigma) := \sup_{d \in \rho(\sigma)} \fairness(M, d).
\end{equation*}
One of the main reasons for studying the fairness ratio is that it gives an approximation result for the class of monotone symmetric norms.

\begin{proposition}[\cite{Goel:2017aa}, Theorem 2.3]
If a matching $M$ is such that $\fairness(M, \sigma) \leq \alpha$ for some problem instance $\sigma$ and some $\alpha \in \mathbb R$, then for every monotone symmetric norm $f$, $\distortion_f(M, \sigma) \leq \alpha$.
\end{proposition}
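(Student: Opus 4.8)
The plan is to reduce the statement to a classical fact about weak majorization: if two nonnegative vectors $u,v\in\mathbb R^n_{\ge0}$ have the property that the sum of the $k$ largest coordinates of $u$ is at most the sum of the $k$ largest coordinates of $v$, for every $1\le k\le n$, then $f(u)\le f(v)$ for every monotone symmetric norm $f$. In the paper's notation this hypothesis is exactly ``$\ctopk(u)\le\ctopk(v)$ for all $k$'' (abusing $\ctopk$ to take a vector rather than a matching). Granting this fact, the rest is bookkeeping.

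Concretely, I would fix a monotone symmetric norm $f$ and a metric $d\in\rho(\sigma)$, and let $\Mopt$ minimize $\cost_f(\cdot,d)$. Set $u\coloneqq(d(a_i,M(a_i)))_{i=1}^n$ and $w\coloneqq(d(a_i,\Mopt(a_i)))_{i=1}^n$. For each $k$, the hypothesis $\fairness(M,\sigma)\le\alpha$ gives $\distortion_k(M,d)\le\alpha$, so $\ctopk(M,d)\le\alpha\cdot\min_{M'}\ctopk(M',d)\le\alpha\cdot\ctopk(\Mopt,d)$; that is, $\ctopk(u)\le\ctopk(\alpha w)$ for every $k$. Applying the majorization fact with $v\coloneqq\alpha w$ then yields $\cost_f(M,d)=f(u)\le f(\alpha w)=\alpha\,f(w)=\alpha\,\cost_f(\Mopt,d)$, hence $\distortion_f(M,d)\le\alpha$. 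Taking the supremum over $d\in\rho(\sigma)$ finishes the proof.

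What remains, and the only real content, is the majorization fact, which I would prove by convex duality. The dual norm $f^{\circ}$ is again a monotone symmetric (in particular absolute) norm, so $f(u)=\max\{\langle u,y\rangle:\ f^{\circ}(y)\le1\}$, and since $u\ge0$ and $f^{\circ}$ is absolute and symmetric, the maximizer $y$ may be taken nonnegative with coordinates sorted in the same decreasing order as those of $u$. After relabeling so that $u_1\ge\cdots\ge u_n$ and $y_1\ge\cdots\ge y_n\ge0$, write $y=\sum_{k=1}^n(y_k-y_{k+1})\,\mathbf 1_{[k]}$ with $y_{n+1}\coloneqq0$ and $\mathbf 1_{[k]}$ the indicator of $\{1,\dots,k\}$, noting the coefficients $y_k-y_{k+1}$ are nonnegative. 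Then $\langle u,y\rangle=\sum_k(y_k-y_{k+1})\,\ctopk(u)\le\sum_k(y_k-y_{k+1})\,\ctopk(v)=\langle v^{\downarrow},y\rangle\le f(v)$, where $v^{\downarrow}$ denotes $v$ with coordinates sorted decreasingly and the last inequality uses $f^{\circ}(y)\le1$ together with symmetry of $f$. Maximizing the left side over $y$ gives $f(u)\le f(v)$. The main obstacle is thus exactly this duality/rearrangement argument; alternatively one can simply invoke Hardy--Littlewood--P\'olya (or cite \cite{Goel:2017aa}) for it, after which everything above is immediate from the definitions.
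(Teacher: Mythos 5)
Your argument is correct, and it is the standard route. The paper does not actually prove this proposition---it is imported verbatim from \cite{Goel:2017aa} (their Theorem~2.3)---so there is no internal proof to compare against. What you wrote is the Ky Fan / Hardy--Littlewood--P\'olya dominance argument: reduce the top-$k$ distortion hypothesis to the weak-majorization inequality $\ctopk(u)\le\ctopk(\alpha w)$ for all $k$, then pass to the dual norm, reduce the dual vector to a nonnegative decreasing one by absoluteness and rearrangement, and expand it as a nonnegative combination of indicator prefixes to telescope into the top-$k$ sums. This is the same mechanism underlying the result in \cite{Goel:2017aa}, and all the intermediate steps you state (dual of a monotone symmetric norm is again monotone symmetric, $\langle v^{\downarrow},y\rangle\le f(v)$ by duality and permutation invariance, homogeneity to absorb $\alpha$, supremum over $d\in\rho(\sigma)$) check out. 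One small remark: you could shorten the presentation by directly invoking the Ky Fan dominance theorem rather than rederiving it, but the self-contained derivation you give is clean and adds no gaps.
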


\section{Max-Distortion vs Sum-Distortion}\label{sec:max-vs-sum}
In the voting setting, \cite{Goel:2018aa} showed that in any metric that is consistent with a given problem instance, a mechanism's max-distortion cannot exceed its sum-distortion by more than an additive constant\footnote{Note that \cite{Goel:2018aa} also go on to prove the stronger result that for any \emph{instance}, the difference between worst-case fairness ratio and worst-case sum-distortion is bounded by a constant.}. This is no longer the case in the matching setting. In particular, we can show that there are metrics in which a matching's max-distortion and sum-distortion can differ by as much as a multiplicative factor of $n$.
\begin{proposition}\label{prop:small-sum-large-max}
There exists a metric $d$ and a perfect matching $M$ such that
\begin{equation*}
(n/2) \cdot \distortion_n(M, d) \leq \distortion_1(M, d).
\end{equation*}
\end{proposition}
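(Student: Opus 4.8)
The plan is to exhibit an explicit metric, realized by points on the real line, together with a single perfect matching $M$ that is simultaneously \emph{optimal} for the sum objective and \emph{as bad as possible} for the max objective. The underlying phenomenon is that on the line the minimum total cost can be attained by two structurally very different matchings: one that spreads the cost uniformly over all $n$ agents, and one that dumps essentially all of it onto a single agent. Choosing $M$ to be the second kind makes $\distortion_n(M,d) = 1$, while the first kind certifies that the optimal max cost is tiny, so $\distortion_1(M,d)$ is large.

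Concretely, I would place agent $a_i$ at the point $i-1$ and item $b_i$ at the point $i$, for $i = 1, \dots, n$, and let $M$ be the ``wrap-around'' matching with $M(a_i) = b_{i-1}$ for $i \geq 2$ and $M(a_1) = b_n$. Each agent $a_i$ with $i \geq 2$ is then colocated with $b_{i-1}$, so $M$ pays $0$ on all of them and pays $d(a_1, b_n) = n$ on $a_1$; hence $\cost_n(M,d) = \cost_1(M,d) = n$. Two facts then finish the proof. First, $M$ is sum-optimal: an arbitrary perfect matching corresponds to a permutation $\pi$ of $\{1, \dots, n\}$ with total cost $\sum_{i} |(i-1) - \pi(i)|$, and since $\sum_i ((i-1) - \pi(i)) = \frac{n(n-1)}{2} - \frac{n(n+1)}{2} = -n$ for every $\pi$, the triangle inequality forces the total cost to be at least $n$, a value that $M$ attains; thus $\distortion_n(M,d) = 1$. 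Second, the optimal max cost is exactly $1$: agent $a_1$ sits at $0$ while every item sits at a point $\geq 1$, so every matching incurs cost at least $1$ on some agent, whereas the ``shift'' matching $a_i \mapsto b_i$ has every edge of length exactly $1$; thus $\distortion_1(M,d) = n$. Combining, $(n/2)\,\distortion_n(M,d) = n/2 \leq n = \distortion_1(M,d)$.

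The only step that is not a one-line computation is the sum-optimality of $M$, and even this reduces to the averaging identity $\sum_i ((i-1) - \pi(i)) = -n$ (which holds for every permutation, since both $\sum_i (i-1)$ and $\sum_i \pi(i)$ are the same for all $\pi$); alternatively one can simply invoke the classical fact that the order-preserving matching is a minimum-cost matching on the line. Everything else — the cost of $M$, the lower bound of $1$ on the optimal max cost, and the final inequality — is immediate, so I do not anticipate a real obstacle.
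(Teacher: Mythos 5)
Your proof is correct and takes essentially the same approach as the paper: a real-line instance where the ``shift'' matching is near-optimal for both objectives, and a wrap-around matching $M$ that dumps all of its cost on a single agent. The paper's version keeps a free spacing parameter $k>1$ between consecutive agent--item pairs, computes $\distortion_n(M,d) = D$ and $\distortion_1(M,d) = n(D+1)/2$ with $D = 2k+1-2k/n$, and observes the stated inequality $\tfrac{n}{2}D \le \tfrac{n}{2}(D+1)$. You instead collapse the spacing to zero (colocating each $a_i$ with $b_{i-1}$), which makes $M$ exactly sum-optimal, so the wrap-around matching has $\distortion_n(M,d)=1$ and $\distortion_1(M,d)=n$. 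This is a cleaner degenerate instance of the same idea: your averaging argument $\sum_i \bigl((i-1)-\pi(i)\bigr) = -n$ together with the triangle inequality replaces the paper's explicit cost bookkeeping, and your construction gives the sharper multiplicative gap $n$ rather than just the required $n/2$. No gaps; the pseudometric with coincident points is explicitly permitted by the paper's preliminaries.
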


\begin{proposition}\label{prop:small-max-large-sum}
There exists a metric $d$ and a perfect matching $M$ such that
\begin{equation*}
n \cdot \distortion_1(M, d) \leq \distortion_n(M, d).
\end{equation*}
\end{proposition}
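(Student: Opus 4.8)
The statement asks only for \emph{some} metric $d$ and \emph{some} perfect matching $M$—no mechanism, preference profile, or consistency requirement is involved—so the plan is to build both by hand. The guiding observation is that for every metric and every perfect matching,
\[
\frac{\distortion_n(M,d)}{\distortion_1(M,d)}=\frac{\cost_n(M,d)}{\cost_1(M,d)}\cdot\frac{\min_{\Mopt}\cost_1(\Mopt,d)}{\min_{\Mopt}\cost_n(\Mopt,d)}\le n\cdot 1,
\]
since a sum of $n$ nonnegative terms is at most $n$ times the largest, and since for any fixed matching the max cost is at most the sum cost (so the min-max cost is at most the min-sum cost). Hence equality forces a very rigid instance: $M$ must charge \emph{every} agent the same positive cost, and simultaneously the min-max-cost matching and the min-sum-cost matching must coincide, leaving all but one agent at cost $0$.

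I would realize this with the shortest-path (pseudo)metric on a cycle with $n+1$ equally spaced points $0,1,\dots,n$, consecutive points at distance $1$, i.e.\ $d(p,q)=\min\bigl(|p-q|,\,n+1-|p-q|\bigr)$. Place agent $a_i$ at point $i-1$ and item $b_i$ at point $i$, for $i=1,\dots,n$: then each of the points $1,\dots,n-1$ carries exactly one agent and one item, point $0$ carries only $a_1$, and point $n$ carries only $b_n$. Let $M$ be the forward shift $M(a_i)=b_i$ for all $i$, and let $M'$ be the backward shift defined by $M'(a_i)=b_{i-1}$ for $i\ge 2$ and $M'(a_1)=b_n$.

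It remains to evaluate the four relevant quantities. Every edge of $M$ joins two consecutive points, so $d(a_i,M(a_i))=1$ for all $i$, giving $\cost_1(M,d)=1$ and $\cost_n(M,d)=n$. The matching $M'$ pairs $a_i$ with the co-located item $b_{i-1}$ for every $i\ge 2$ and uses the wrap-around edge $d(a_1,b_n)=d(0,n)=\min(n,1)=1$, so $\cost_1(M',d)=\cost_n(M',d)=1$; hence both $\min_{\Mopt}\cost_1(\Mopt,d)\le 1$ and $\min_{\Mopt}\cost_n(\Mopt,d)\le 1$. Conversely, no perfect matching can have all edges of length $0$, since such a matching would pair $a_1$ (at point $0$) with an item located at point $0$, and none sits there; as all pairwise distances are integers this forces $\min_{\Mopt}\cost_1(\Mopt,d)\ge 1$, hence $=1$, and then the displayed inequality gives $\min_{\Mopt}\cost_n(\Mopt,d)=1$ as well. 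Therefore $\distortion_1(M,d)=1$ and $\distortion_n(M,d)=n$, which yields $n\cdot\distortion_1(M,d)\le\distortion_n(M,d)$ (in fact with equality). The argument is entirely elementary and I expect no real obstacle; the only step meriting a second look is the lower bound $\min_{\Mopt}\cost_1(\Mopt,d)\ge 1$—that the lone ``defect'' at point $0$ cannot be patched for free—which is immediate. If a strict metric is preferred, perturbing each coincident pair of points apart by a tiny $\epsilon$ changes nothing, as noted in the preliminaries.
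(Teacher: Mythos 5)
Your proof is correct and follows essentially the same approach as the paper's: both construct a cycle of $n+1$ points with one ``lone'' agent and one ``lone'' item a unit distance apart, all other agent--item pairs collocated, and take $M$ to be the matching that shifts each agent to the next point around the cycle (your construction is effectively the paper's polygon instance with its parameter $D$ set to $1$). Your opening observation that $\distortion_n/\distortion_1 \le n$ always holds is a nice touch that the paper leaves implicit, but the core example is the same.
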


Taken together,  \Cref{prop:small-sum-large-max,prop:small-max-large-sum}
suggest that a bound on the sum-distortion may not automatically translate to a similar bound (up to constants) on the max-distortion and vice versa; a more careful analysis of the \emph{worst-case} sum-distortion and max-distortion is likely needed.
Nevertheless, both quantities are somewhat useful for bounding the fairness ratio, as we show below.
\begin{proposition}\label{prop:distortion-fairness-bounds}
For any metric $d$ and any perfect matching $M$,
\begin{align*}
 \fairness(M, d) \leq n \cdot \min ( \distortion_1(M, d) , \distortion_n(M, d) ).
\end{align*}
Furthermore, for any problem instance $\sigma$ and any perfect matching $M$,
\begin{align*}
 \fairness(M, \sigma) \leq n \cdot \min ( \distortion_1(M, \sigma) , \distortion_n(M, \sigma) ).
\end{align*}
\end{proposition}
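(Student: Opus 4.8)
The plan is to prove the per-metric inequality first and then lift it to the per-instance version by taking suprema. Fix a metric $d$ and a perfect matching $M$. Since $\fairness(M,d) = \max_{1 \le k \le n} \distortion_k(M,d)$, it suffices to show that for every $k$ we have both $\distortion_k(M,d) \le n \cdot \distortion_1(M,d)$ and $\distortion_k(M,d) \le n \cdot \distortion_n(M,d)$, and then maximize over $k$.

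I would control the numerator $\ctopk(M,d)$ using two elementary facts valid for the top-$k$ cost of \emph{any} matching $N$: since $\ctopk(N,d)$ is a sum of $k$ of the $n$ nonnegative distances $d(a_i, N(a_i))$, it is at most $k \cdot \cost_1(N,d) \le n \cdot \cost_1(N,d)$, and it is at most the full sum $\cost_n(N,d)$. Applying these with $N = M$ bounds $\ctopk(M,d)$ above by $n \cdot \cost_1(M,d)$ and by $\cost_n(M,d)$ respectively.

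The one point that needs care is the denominator: the matching minimizing $\ctopk$ is in general not the one minimizing $\cost_1$ or $\cost_n$, so the three optima must be kept straight, and I would handle this via monotonicity across $k$ rather than by comparing optimal matchings directly. Every matching $N$ satisfies $\ctopk(N,d) \ge \cost_1(N,d)$ (the top-$k$ sum contains the single largest distance) and $\ctopk(N,d) \ge \frac{k}{n}\cost_n(N,d) \ge \frac{1}{n}\cost_n(N,d)$ (the $k$ largest of $n$ nonnegative numbers are at least a $k/n$ fraction of their total). Taking the minimum over $N$ on each side gives $\min_{M^*}\ctopk(M^*,d) \ge \min_{M^*}\cost_1(M^*,d)$ and $\min_{M^*}\ctopk(M^*,d) \ge \frac{1}{n}\min_{M^*}\cost_n(M^*,d)$. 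Dividing the numerator bounds by these denominator bounds yields $\distortion_k(M,d) \le n \cdot \distortion_1(M,d)$ and $\distortion_k(M,d) \le n \cdot \distortion_n(M,d)$, establishing the first claim.

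For the per-instance statement I would take the supremum over $d \in \rho(\sigma)$ of the per-metric inequality, using $\sup_d \min(g(d), h(d)) \le \min(\sup_d g(d), \sup_d h(d))$ together with the definitions $\distortion_1(M,\sigma) = \sup_d \distortion_1(M,d)$ and $\distortion_n(M,\sigma) = \sup_d \distortion_n(M,d)$ (and the analogous identity for $\fairness$). The argument is short overall; the only genuine subtlety is the mismatch between the optimal matchings for the different objectives, which the monotonicity of $\ctopk$ in $k$ neatly circumvents.
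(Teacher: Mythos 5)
Your proof is correct and takes essentially the same route as the paper: both arguments rest on the norm-equivalence chain $\cost_1 \le \cost_k \le \cost_n \le n\cdot\cost_1$, apply it separately to the numerator (with the given matching $M$) and the denominator (with the top-$k$ optimum $\Mopt_k$), and keep the three different optimal matchings carefully distinct via the observation that any optimum for one objective upper-bounds the corresponding cost of any other matching. The only cosmetic difference is the intermediate meeting point: the paper threads both chains through $\cost_1(\Mopt_1)$, while for the sum case you meet at $\cost_n(\Mopt_n)$ using the mildly sharper inequality $\ctopk(N) \ge \tfrac{k}{n}\cost_n(N)$; the per-instance lift by suprema is also handled identically.
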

In the absence of a bound on the worst-case top-$k$-distortion, it can therefore be worthwhile to consider both the worst-case max-distortion and sum-distortion when analyzing the worst-case fairness ratio. For proofs of the above results, see \Cref{app:max-vs-sum}.

\section{Distortion of RepMatch}
The RepMatch mechanism, proposed by \cite{Anari:2023aa}, is a simple procedure that relies on the following intuitive premise. If two agents have similar favorite items, then they are likely to be close to each other in some sense, and thus a good assignment for one agent is also likely to be a good assignment for the other. The mechanism iteratively clusters
agents whose distance to each other is small, and uses the preferences of a single designated agent within the cluster, called the \textit{representative},  as a proxy for the others' preferences. If the representatives of two clusters overlap in their favorite items, then we can merge the clusters, and promote one of the representatives to become the representative of the merged cluster.
Once the representatives' favorite items do not overlap, the mechanism arbitrarily matches each agent to one of their representative's favorite items.

One design choice in the mechanism is in deciding which representative gets promoted when merging two clusters. \cite{Anari:2023aa} do this by associating each cluster $S$ with a parameter $\lambda$, and promoting the representative corresponding to the larger $\lambda$. Their analysis shows that the total cost of matching the agents in $S$ to the top $|S|$ choices of the representative is at most $|S|\cdot 2^\lambda$ times their cost in the optimal matching, and at the same time $2^\lambda \leq |S|$, giving a bound of $O(n^2)$ for the sum-distortion.

We take a more direct approach, by simply promoting the representative whose cluster is larger. We show that this modification still retains the sum-distortion guarantee, but also allows the analysis to give improved results for the max-distortion and more generally top-$k$-distortion.

Below, we present the formal description of the mechanism. 

\begin{center}
\myalg{alg:rep-match}{RepMatch (with size-based promotion)}{
Initially, there are $n$ singleton sets $S_1, S_2, \dots, S_n$, with $S_i = \{a_i\}$, $r_i = a_i$.\\
While there exist two sets $S_i$ and $S_j$ such that some item is both one of $r_i$'s favorite $|S_i|$ items and one of $r_j$'s favorite $|S_j|$ items: 
\begin{itemize}
	\item Replace $S_i$ and $S_j$ with $S_i \cup S_j$. 
	\item The representative for $S_i \cup S_j$ is $r_i$ if $|S_i| \geq |S_j|$ and $r_j$ otherwise. 
\end{itemize} 
Arbitrarily match the agents in each $S_i$ to the favorite $|S_i|$ choices of $r_i$.
}
\end{center}

Athough the RepMatch mechanism was designed with the metric distortion framework in mind, it is quite natural and its description makes no mention of metric spaces. It seems reasonable for a representative to act on the behalf of other individuals with similar preferences. When two representatives share similar interests, it might also make sense for them to form a larger coalition.

Since the worst-case sum-distortion of the RepMatch mechanism was shown by \cite{Anari:2023aa} to be $O(n^2)$, a naive application of  \Cref{prop:distortion-fairness-bounds} would have suggested a fairness ratio of $O(n^3)$. However, a careful analysis of \Cref{alg:rep-match} presented in the following sections provides an improved bound for the fairness ratio and new $f$-distortion bounds for several norms $f$.

\subsection{Upper Bounds}
We begin by providing an upper bound on the top-$k$-distortion of \Cref{alg:rep-match}.
\begin{theorem}\label{thm:mod-rep-match-ub-topk}
\Cref{alg:rep-match} guarantees  worst-case top-$k$-distortion  $O(k^2 (n/k)^{\log_2 3})$. 
\end{theorem}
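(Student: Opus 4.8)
The plan is to charge the cost of the output matching $M$ cluster by cluster against the optimal top-$k$ matching. Fix a metric $d$ consistent with $\sigma$, a matching $\Mopt$ minimizing $\cost_k(\cdot,d)$, and set $\delta:=\cost_1(\Mopt,d)=\max_i d(a_i,\Mopt(a_i))$, so that $\delta\le\cost_k(\Mopt,d)$. Let $S_1,\dots,S_m$ be the clusters at termination with representatives $r_1,\dots,r_m$ and favorite-$|S_j|$ sets $T_j$; since the $S_j$ partition $A$ and the $T_j$ are pairwise disjoint (the loop condition fails at termination), the $T_j$ partition $B$. Each agent of $S_j$ is matched into $T_j$, so for $a\in S_j$ we have $d(a,M(a))\le d(a,r_j)+d(r_j,M(a))$; and because $T_j$ consists of the $|S_j|$ items closest to $r_j$, the sum of the $t$ largest values $d(r_j,M(a))$ over $a\in S_j$ (equivalently, the $t$ largest $d(r_j,b)$ over $b\in T_j$, independent of the arbitrary matching) is dominated by the corresponding sum over $\Mopt$'s items for $S_j$. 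This lets us reduce the theorem to a single per-cluster estimate:
\[
\cost_t(M|_{S_j},d)\;=\;O\!\big(t^{2}(|S_j|/t)^{\log_2 3}\big)\cdot\delta .
\]

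I would prove this per-cluster bound by induction over the binary tree of merges that built $S_j$. Consider one merge $v=P\cup Q$ with $|Q|\ge|P|$, so that size-based promotion keeps $r_v=r_Q$. Two facts drive the recursion. First, the item $b$ triggering the merge lies among $r_P$'s favorite $|P|$ items and $r_Q$'s favorite $|Q|$ items, so $d(r_P,r_Q)\le\mu(P)+\mu(Q)$, where $\mu(C)$ denotes the distance from $r_C$ to its $|C|$-th favorite item and $\rho(C):=\max_{a\in C}d(a,r_C)$; hence every agent of the smaller side $P$ moves towards $r_Q$ by at most $\rho(P)+\mu(P)+\mu(Q)$. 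Second, $\mu(C)\le\rho(C)+\delta$ for every cluster $C$, since $\Mopt$ exhibits $|C|$ distinct items each within $\rho(C)+\delta$ of $r_C$. Together these give $\rho(v)\le\rho(Q)+2\rho(P)+2\delta$ with $|Q|\ge|P|$, whose extremal case is a balanced merge tree in which the radius roughly triples over each of the $\log_2|v|$ levels, yielding the factor $3^{\log_2|v|}=|v|^{\log_2 3}$. For the full top-$t$ cost one must not simply multiply the radius by $t$; instead one propagates the whole sorted vector of displacements through the recursion (an amortized form of the same argument), which converts $|v|^{\log_2 3}$ into the interpolating factor $t^{2}(|v|/t)^{\log_2 3}$ — equal to $|v|^{2}$ at $t=|v|$ and $|v|^{\log_2 3}$ at $t=1$.

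Finally, assemble: any choice of $k$ agents picks $t_j\le|S_j|$ from $S_j$ with $\sum_j t_j=k$, so
\[
\cost_k(M,d)\;\le\;\sum_j\cost_{t_j}(M|_{S_j},d)\;\le\;\delta\cdot O\!\Big(\sum_j t_j^{\,2-\log_2 3}\,|S_j|^{\log_2 3}\Big).
\]
Sorting the clusters by decreasing size gives $|S_j|\le n/j$, and since $\log_2 3>1$ the weights $j^{-\log_2 3}$ are summable while $t\mapsto t^{2-\log_2 3}$ is concave; optimizing then shows $\sum_j t_j^{\,2-\log_2 3}|S_j|^{\log_2 3}=O(k^{2-\log_2 3}n^{\log_2 3})=O(k^{2}(n/k)^{\log_2 3})$, attained essentially by a single cluster containing all of $A$. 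Since $\delta\le\cost_k(\Mopt,d)$ and $d$ was an arbitrary consistent metric, this gives $\distortion_k(M,\sigma)=O(k^{2}(n/k)^{\log_2 3})$.

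The step I expect to be the main obstacle is the per-cluster bound, and specifically obtaining the factor $t^{2}(|v|/t)^{\log_2 3}$ rather than the weaker $t\cdot|v|^{\log_2 3}$ that falls out of naively charging every selected agent the cluster radius: the difference matters because the sum objective ($t=n$) has to come out as $O(n^{2})$, so one really needs to control how the displacements are distributed across the merge tree, not just their maximum. Minor technical points are verifying that size-based promotion makes the balanced merge tree the extremal shape for the radius recursion, handling instances with zero distances by the standard $\epsilon$-perturbation, and justifying the discrete optimization in the assembly step.
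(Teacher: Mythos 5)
Your setup is sound through the cluster decomposition, the radius recursion $\rho(v)\leq 2\rho(P)+\rho(Q)+2\delta$, and the observation that $\mu(C)\leq\rho(C)+\delta$; these do yield the $k=1$ case (max-distortion $O(n^{\log_2 3})$). But the per-cluster estimate you then rely on,
\[
\cost_t(M|_{S_j},d)=O\!\big(t^{2}(|S_j|/t)^{\log_2 3}\big)\cdot\delta ,
\]
is precisely where the argument breaks, and you flag it yourself as the main obstacle. The trouble is not just that the step is unproven --- it appears to be \emph{false} as stated. Propagating the radius recursion with equality through a balanced merge tree of size $m=2^\ell$, the multiset of agent-to-representative distances becomes $\{\sum_{i\in T}2\cdot 3^{i}\delta : T\subseteq\{0,\dots,\ell-1\}\}$, whose maximum is $(3^\ell-1)\delta=\rho_\ell$ and whose sum is $2^{\ell-1}(3^\ell-1)\delta=\Theta(m^{1+\log_2 3}\delta)\approx\Theta(m^{2.58}\delta)$. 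That strictly exceeds the $O(m^2\delta)$ your bound promises at $t=m$. Charging the whole recursion to the single scalar $\delta=\cost_1(\Mopt_k)$ is too coarse: the recursion can only be tight at many levels if many agents in the cluster genuinely have $\Mopt_k$-cost close to $\delta$, which inflates the denominator $\ctopk(\Mopt_k)$ and is exactly the effect you are losing by fixing the charge unit.

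The paper's proof builds this dependence into the invariant. Instead of $\delta$, it charges against the cluster-restricted cost $\ctopk(\Mopt_k|_{S_i})$, maintaining $d(a,r_i)\leq -d(a,\Mopt_k(a))+w(S_i)\cdot\ctopk(\Mopt_k|_{S_i})$, and it uses a two-regime weight recursion: $w(S_i\cup S_j)=\max(2w(S_j),w(S_i))$ while $|S_i\cup S_j|\leq k$ (so $w$ stays $\leq|S|$, exploiting the additivity $\ctopk(\Mopt_k|_{S_i\cup S_j})=\ctopk(\Mopt_k|_{S_i})+\ctopk(\Mopt_k|_{S_j})$ for small sets), and only switches to $w(S_i\cup S_j)=2w(S_j)+w(S_i)$ (giving $(|S|/k)^{\log_2 3}$ growth via the same $2x^c+y^c\leq(x+y)^c$ inequality you identify) once the set exceeds size $k$. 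The phase transition at size $k$ is what produces the interpolating exponent $k^2(n/k)^{\log_2 3}$; it cannot be recovered from the radius recursion against $\delta$ alone. The final assembly is also much simpler than yours: each agent's cost is bounded by $2w(S_i)\cdot\ctopk(\Mopt_k)\leq 2k(n/k)^{\log_2 3}\ctopk(\Mopt_k)$ uniformly, and one just multiplies by $k$ --- no per-cluster budget allocation or concavity optimization is needed.
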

Note that our upper bound on the top-$k$-distortion immediately translates to an upper bound on the fairness ratio. Since  $k^2 (n/k)^{\log_2 3} \leq n^2$  for every positive integer $k \leq n$, we obtain a quadratic upper bound on the $f$-distortion for every monotone symmetric norm $f$.
Setting $k = 1$, our result also implies a sub-quadratic upper bound on the max-distortion.
\begin{corollary}
\Cref{alg:rep-match} guarantees  worst-case fairness ratio $O(n^2)$.
\end{corollary}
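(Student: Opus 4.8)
The plan is to obtain this directly from \Cref{thm:mod-rep-match-ub-topk}. Recall that the worst-case fairness ratio of the matching $M$ produced by \Cref{alg:rep-match} on an instance $\sigma$ satisfies
\[
\fairness(M,\sigma) = \sup_{d \in \rho(\sigma)} \max_{1 \le k \le n} \distortion_k(M,d) \;\le\; \max_{1 \le k \le n} \sup_{d \in \rho(\sigma)} \distortion_k(M,d) \;=\; \max_{1 \le k \le n} \distortion_k(M,\sigma),
\]
so it suffices to show that the per-$k$ bound $O\!\big(k^2 (n/k)^{\log_2 3}\big)$ from \Cref{thm:mod-rep-match-ub-topk} is $O(n^2)$ uniformly over $k \in \{1,\dots,n\}$.

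First I would rewrite the per-$k$ bound so its dependence on $k$ is transparent:
\[
k^2 (n/k)^{\log_2 3} \;=\; n^{\log_2 3}\cdot k^{\,2-\log_2 3}.
\]
Since $\log_2 3 < 2$, the exponent $2 - \log_2 3$ is strictly positive, so this quantity is increasing in $k$ and is maximized over $1 \le k \le n$ at $k = n$, where it equals $n^{\log_2 3}\cdot n^{2-\log_2 3} = n^2$. Hence $\distortion_k(M,\sigma) = O(n^2)$ for every $k$, and taking the maximum over $k$ yields $\fairness(M,\sigma) = O(n^2)$.

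There is essentially no obstacle beyond this elementary estimate; the entire content of the corollary is already carried by \Cref{thm:mod-rep-match-ub-topk}, and the only thing to check is that the worst case among the top-$k$ objectives is the sum objective $k=n$. If one then combines the corollary with \cite{Goel:2017aa}, Theorem 2.3, one concludes that $\distortion_f(M,\sigma) = O(n^2)$ for every monotone symmetric norm $f$, which in particular recovers the $f$-distortion claims made in the introduction.
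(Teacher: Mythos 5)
Your proof is correct and follows essentially the same route as the paper: the paper also derives the corollary directly from Theorem~\ref{thm:mod-rep-match-ub-topk} by observing that $k^2(n/k)^{\log_2 3}\le n^2$ for all $1\le k\le n$. You add a bit of useful explicitness that the paper leaves implicit — rewriting the bound as $n^{\log_2 3}k^{2-\log_2 3}$ to see monotonicity in $k$, and noting the (harmless) $\sup_d\max_k \le \max_k\sup_d$ interchange needed to pass from per-$k$ worst-case bounds to a worst-case fairness ratio — but the substance is the same.
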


\begin{corollary}
\Cref{alg:rep-match} guarantees  worst-case max-distortion  $O(n^{\log_2 3})$.   
\end{corollary}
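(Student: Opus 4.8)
\textit{Plan.} The statement is the $k = 1$ case of \Cref{thm:mod-rep-match-ub-topk}, so I will sketch a self-contained proof of that case. Fix an instance $\sigma$, a metric $d \in \rho(\sigma)$, and a perfect matching $\Mopt$ minimizing $\cost_1(\cdot, d)$; set $\OPT := \cost_1(\Mopt, d)$, so $d(a, \Mopt(a)) \le \OPT$ for every agent $a$. Write $M$ for the matching output by \Cref{alg:rep-match} and $\gamma := \log_2 3$. It suffices to show that $d(a, M(a)) \le 4 n^{\gamma}\,\OPT$ for every agent $a$: maximizing over $a$ then gives $\distortion_1(M, d) = O(n^{\gamma}) = O(n^{\log_2 3})$, and since $d \in \rho(\sigma)$ was arbitrary this bounds the worst-case max-distortion.

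The crux is a pair of invariants, proved by induction on the merge tree. For every cluster $S$ arising during the execution, with representative $r(S)$: (a) every one of $r(S)$'s favorite $|S|$ items lies within $2|S|^{\gamma}\,\OPT$ of $r(S)$; and (b) every agent of $S$ lies within $(2|S|^{\gamma} - 1)\,\OPT$ of $r(S)$. The base case ($S$ a singleton) is immediate: (b) is trivial, and (a) holds because $r(S)$'s favorite item is at distance at most $\OPT$ from $r(S)$, hence within $2\,\OPT$. For the inductive step, write $S = S_1 \cup S_2$ with $|S_1| \ge |S_2|$, so $r(S) = r(S_1)$, and let $b^\star$ be an item witnessing the merge (one of $r(S_1)$'s favorite $|S_1|$ items and also one of $r(S_2)$'s favorite $|S_2|$ items). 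Invariant (a) applied to $S_1$ and $S_2$ bounds $d(r(S_1), b^\star)$ and $d(r(S_2), b^\star)$, so the triangle inequality through $b^\star$ gives $d(r(S_1), r(S_2)) \le (2|S_1|^{\gamma} + 2|S_2|^{\gamma})\,\OPT$. Combining this with invariant (b) for $S_1$ and $S_2$, and using the numerical inequality $|S_1|^{\gamma} + 2|S_2|^{\gamma} \le |S|^{\gamma}$ — which holds for all positive integers $|S_1| \ge |S_2|$, since after scaling it amounts to $\lambda^{\gamma} + 2(1-\lambda)^{\gamma} \le 1$ for $\lambda \in [1/2, 1]$, a consequence of convexity of $t \mapsto t^{\gamma}$ together with $2^{\log_2 3} = 3$ — yields invariant (b) for $S$.

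The step I expect to be the main obstacle is re-establishing invariant (a) for $S$: the merge condition directly controls only the single shared item $b^\star$ together with $r(S)$'s favorite $|S_1|$ items, whereas (a) asks about all $|S| = |S_1| + |S_2|$ of $r(S)$'s favorite items, including those ranked in positions $|S_1| + 1$ through $|S|$. The remedy is to route through the optimal matching: by invariant (b) for $S$ (just established) and $d(a, \Mopt(a)) \le \OPT$, each of the $|S|$ distinct items in $\Mopt(S)$ lies within $(2|S|^{\gamma} - 1)\,\OPT + \OPT = 2|S|^{\gamma}\,\OPT$ of $r(S)$; since, by consistency, any item that close to $r(S)$ is ranked ahead of every strictly farther item, all of $r(S)$'s favorite $|S|$ items lie within $2|S|^{\gamma}\,\OPT$, which is exactly invariant (a). Finally, each agent $a$ belongs to some final cluster $S$ and is matched to one of $r(S)$'s favorite $|S|$ items, so invariants (a) and (b) and the triangle inequality give $d(a, M(a)) \le (2|S|^{\gamma} - 1)\,\OPT + 2|S|^{\gamma}\,\OPT \le 4n^{\gamma}\,\OPT$, completing the proof. (The general top-$k$ bound of \Cref{thm:mod-rep-match-ub-topk} follows from an analogous but more delicate induction that tracks each cluster's top-$k$ cost rather than its single worst agent, which is where the extra factor $k^2$ enters.)
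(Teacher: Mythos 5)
Your proof is correct and is essentially the paper's argument specialized to $k=1$: you track a cluster potential that grows like $|S|^{\log_2 3}$ (the paper's weight $w(S)$), use the shared item $b^\star$ plus the triangle inequality to bound the inter-representative distance at a merge, route through the optimal matching to bound the distances of the representative's $|S|$ favorite items, and close the induction with the same numerical inequality $2x^{\gamma} + y^{\gamma} \le (x+y)^{\gamma}$ for $\gamma = \log_2 3$ (the paper's Claim C.1). Your observation that this inequality follows from convexity of $\lambda \mapsto \lambda^{\gamma} + 2(1-\lambda)^{\gamma}$ with equal endpoint values at $\lambda = 1/2$ and $\lambda = 1$ is a slightly cleaner justification than the paper's second-derivative computation, but the content is identical.
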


\begin{proof}[Proof of \Cref{thm:mod-rep-match-ub-topk}]
Fix a consistent metric $d$ and let $\Mopt_k$ be the perfect matching that minimizes the function $\ctopk$. For the sake of the analysis, we associate each set $S$ of agents maintained by the algorithm  with a nonnegative ``weight'' $w(S)$. For the initial singletons $S_i$ we have $w(S_i) = 1$. Whenever the mechanism merges sets $S_i$ and $S_j$ with $|S_i| \geq |S_j|$, the weight associated with the union $S_i \cup S_j$ is as follows,
\begin{equation}\label{eq:mod-rep-match-topk-inv1}
w(S_i \cup S_j) = 
\begin{cases}
\max(2w(S_j), w(S_i)) & |S_i| + |S_j| \leq k \\ 
2w(S_j) + w(S_i) & |S_i| + |S_j| > k
\end{cases}.
\end{equation}
We will show that the mechanism maintains the invariant that for each set $S_i$ and each agent $a \in S_i$, 
\begin{equation}\label{eq:mod-rep-match-topk-inv2}
d(a, r_i) \leq -d(a, \Mopt_k(a)) + w(S_i) \cdot \ctopk(\Mopt_k|_{S_i}).
\end{equation}
Note that this implies that for each agent $a \in S_i$, 
\begin{equation*}
d(r_i, \Mopt_k(a))
\leq d(r_i, a) + d(a, \Mopt_k(a))   \leq w(S_i) \cdot \ctopk(\Mopt_k|_{S_i}).
\end{equation*}
Moreover, it follows that there are at least $|S_i|$ distinct items $b$ such that
\begin{equation}\label{eq:mod-rep-match-topk-inv3}
d(r_j, b) \leq w(S_i) \cdot \ctopk(\Mopt_k|_{S_i}).
\end{equation}
In particular, this holds when $b$ is one of $r_i$'s favorite $|S_i|$ items. If an agent $a \in S_i$ is matched by the mechanism to one of those favorite $|S_i|$ items, then by \eqref{eq:mod-rep-match-topk-inv2} and \eqref{eq:mod-rep-match-topk-inv3} she will incur a cost at most 
\begin{align*}
d(a, r_i)  + d(r_i, b) 
\leq 2 w(S_i) \cdot \ctopk(\Mopt_k|_{S_i}) 
\leq 2 w(S_i) \cdot \ctopk(\Mopt_k).
\end{align*}
We will additionally show that for every set $S_i$ maintained by the mechanism, $w(S_i) \leq k(n/k)^{\log_2 3}$. This, coupled with the above, implies that sum of the $k$ largest costs incurred as a result of the mechanism is at most $2k^2  (n/k)^{\log_2 3} \cdot \ctopk(\Mopt_k)$, which in turn implies the stated distortion bound. 

\paragraph{Establishing the invariant.}
Clearly the invariant holds initially. 
So, we consider what happens at the end of an iteration given that the invariant held at the start of that iteration. During an iteration, there will be some some set $S_i$ that is merged with the set $S_j$.
All other sets remain unchanged, so we only need to show that the invariant holds for the union $S_i \cup S_j$.
We assume without loss of generality that $|S_i| \geq |S_j|$, and thus the agent $r_i$ becomes the representative for the union $S_j \cup S_j$.
If $|S_i \cup S_j| > k$, note that equation \eqref{eq:mod-rep-match-topk-inv1} implies that 
\begin{align*}
w(S_i \cup S_j) \cdot  \ctopk(\Mopt_k|_{S_i \cup S_j})
&= (2w(S_j) + w(S_i) ) \cdot \ctopk(\Mopt_k|_{S_i \cup S_j})\\
&\geq  2w(S_j) \cdot \ctopk(\Mopt_k|_{S_j}) + w(S_i) \cdot \ctopk(\Mopt_k|_{S_i}).
\end{align*}
If $|S_i \cup S_j| \leq k$, note that equation \eqref{eq:mod-rep-match-topk-inv1} also implies that 
\begin{align*}
w(S_i \cup S_j) \cdot  \ctopk(\Mopt_k|_{S_i \cup S_j})
&= \max(2w(S_j), w(S_i)) \cdot \ctopk(\Mopt_k|_{S_i \cup S_j})\\
&= \max(2w(S_j), w(S_i)) \cdot (\ctopk(\Mopt_k|_{S_i}) + \ctopk(\Mopt_k|_{S_j}))\\
&\geq  2w(S_j) \cdot \ctopk(\Mopt_k|_{S_j}) + w(S_i) \cdot \ctopk(\Mopt_k|_{S_i}).
\end{align*}
Notice that we have an identical lower bound in both cases.

Since $S_i$ and $S_j$ are merged, there must be some item $b$ that is one of representative $r_i$'s favorite $|S_i|$ items and one of representative $r_j$'s favorite $|S_j|$ items. This observation, along with equation \eqref{eq:mod-rep-match-topk-inv2}, implies that  for every agent $a \in S_j$,
\begin{align*}
d(a, r_i) 
&\leq d(a, r_j) + d(r_j, b) + d(b, r_i) \\
&\leq -d(a, \Mopt_k(a)) + 2w(S_j) \cdot \ctopk(\Mopt_k|_{S_j}) + w(S_i) \cdot \ctopk(\Mopt_k|_{S_i}). \\
&\leq -d(a, \Mopt_k(a)) + w(S_i \cup S_j) \cdot  \ctopk(\Mopt_k|_{S_i \cup S_j})
\end{align*}
as desired. 
By equations \eqref{eq:mod-rep-match-topk-inv1} and \eqref{eq:mod-rep-match-topk-inv2}, we can also  show that for every agent $a \in S_i$,
\begin{align*}
d(a, r_i) 
&\leq -d(a, \Mopt_k(a)) + w(S_i) \cdot \ctopk(\Mopt_k|_{S_i})  \\
&\leq -d(a, \Mopt_k(a)) + w(S_i \cup S_j) \cdot \ctopk(\Mopt_k|_{S_i \cup S_j}), 
\end{align*}
as desired. 
Therefore, the invariant is maintained throughout the mechanism.

\paragraph{Bounding the growth of $w(S)$.}
It remains to determine how the weights grow as the mechanism merges sets. We claim that for each set $S$ maintained by the the mechanism, 
\begin{equation}\label{eq:mod-rep-match-topk-weights-growth}
w(S) 
\leq 
\begin{cases}
|S| & \text{if } |S| \leq k\\
k(|S|/k)^{\log_23} & \text{if } |S| > k
\end{cases}.
\end{equation}
We will show this inductively. As our base case, we consider when $|S| \leq k$. It is not hard to verify that in this case, we have $w(S) \leq |S|$.
This clearly holds when the set $S$ is a singleton set. If the set $S$ is the result of  merging the sets $S_i$ and $S_j$ with $k \geq |S_i| \geq |S_j|$ and weights $w(S_i)$ and $w(S_j)$ satisfying equation \eqref{eq:mod-rep-match-topk-weights-growth}, then equation \eqref{eq:mod-rep-match-topk-inv1} implies that 
\begin{equation*}
w(S_i \cup S_j)
= \max ( 2w(S_j) , w(S_i))
\leq \max (2 |S_j| , |S_i|) 
\leq |S_j| + |S_i|,
\end{equation*}
as desired. 

Now, suppose that $|S| > k$ and that $S$ is the result of merging sets $S_i$ and $S_j$ with $|S_i| \geq |S_j|$ and weights $w(S_i)$ and $w(S_j)$ satisfying equation \eqref{eq:mod-rep-match-topk-weights-growth}. Note that if a set $T \in \{S_i, S_j \}$ satisfies $|T| \leq k$, then it also the case that $|T| \leq k (|T|/k)^{\log_2 3}$. This, along with equation \eqref{eq:mod-rep-match-topk-inv1} implies that 
\begin{align*} 
w(S_i \cup S_j) 
&= 2w(S_j) + w(S_i) \\
&\leq 2k(|S_j|/k)^{\log_23} + k(|S_i|/k)^{\log_2 3} \\
&\leq k(|S_i\cup S_j|/k)^{\log_2 3},
\end{align*}
where the final inequality uses the fact that for $0 < x \leq y$ and $c = \log_2 3$, $2x^c + y^c \leq (x + y)^c$, which is proven in \Cref{app:rep-match} for completeness. Since equation \eqref{eq:mod-rep-match-topk-weights-growth} continues to be satisfied, our induction is complete.
\end{proof}

\subsection{Lower Bounds}
We also provide lower bounds for the top-$k$-distortion of \Cref{alg:rep-match}. The lower bounds that we establish will assume \emph{worst-case} tie-breaking at the points where arbitrary decisions can be made. Note that there can be several of these points in the algorithm. For instance, when two sets $S_i$ and $S_j$ with $|S_i| = |S_j|$ are merged, either agent $r_i$ or agent $r_j$ can be the representative of the union $S_i \cup S_j$, but the mechanism has to choose one of them. Another point of arbitrariness is when the mechanism has to assign agents in a partition $S_i$ to the favorite $|S_i|$ choices of the representative $r_i$. 

Our lower bounds will not rule out the existence of a matching with low distortion among the family of matchings that can be returned by the algorithm. 
In a sense, they only highlight a specific failure mode of the mechanism when tie-breaking is done in an ad-hoc fashion. We suspect that with more careful tie-breaking, one might be able to circumvent these failure modes and improve the distortion of the algorithm.

\begin{theorem}\label{thm:mod-rep-match-lb-topk}
\Cref{alg:rep-match} (with worst-case tie-breaking) can produce a matching with worst-case top-$k$ distortion $\Omega(kn)$.
\end{theorem}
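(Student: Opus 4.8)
The plan is to construct a family of instances, parametrized by $k$ and $n$, on which \Cref{alg:rep-match} with worst-case tie-breaking produces a matching whose top-$k$ cost is $\Omega(kn)$ times larger than the top-$k$ cost of the optimum. The natural starting point is the instance used by \cite{Anari:2023aa} to show the $\Omega(n^2)$ sum-distortion lower bound (which the excerpt reuses in \Cref{thm:mod-rep-match-lb-topk} to also claim an $\Omega(n^2)$ sum-distortion bound for the modified mechanism). There, one arranges agents so that the merging process chains them into one giant cluster whose representative is a single agent $r$, forcing $\Theta(n)$ agents to be matched to $r$'s favorite $n$ items even though those items may be far from the agents themselves. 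For the top-$k$ version, I would instead build $\Theta(k)$ such "bad blocks," each of size $\Theta(n/k)$: within each block the merging dynamics (exactly as in the sum lower bound but scaled down) force every agent in the block to pay $\Omega(n/k)$ times its optimal cost, and there are $k$ blocks whose bad agents are disjoint, so the sum of the $k$ largest costs is $\Omega(k \cdot (n/k)) \cdot(\text{something}) = \Omega(kn)$ times the optimum — I need to check the bookkeeping carefully so the per-agent blow-up is the full $\Theta(n/k)$ and not, say, $\Theta(\log(n/k))$.

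The key steps, in order: (1) recall/restate the Anari–Charikar–Ramakrishnan construction and verify it still drives the modified (size-based promotion) mechanism into a single cluster — this should be robust since the only change is which of two equal-size representatives is promoted, and that can be resolved adversarially by worst-case tie-breaking; (2) specify the metric by placing, in each block, one "hub" agent near a stack of items and $\Theta(n/k)$ "peripheral" agents each at distance $\approx 1$ from the hub but with preference lists engineered so that the merges cascade through the hub's favorite-items list; (3) choose the final (arbitrary) assignment within each block adversarially so each peripheral agent is matched to a far item, incurring cost $\Omega(n/k)$, while the optimal matching pairs each peripheral agent with its own nearby item at cost $O(1)$; (4) compute $\ctopk$ of the mechanism's output by summing the $k$ largest agent costs — taking one bad agent from each of the $k$ blocks gives $\Omega(k \cdot n/k) = \Omega(n)$... wait, that is only $\Omega(n)$, not $\Omega(kn)$, so in fact I need each of the $k$ selected bad agents to individually pay $\Omega(n)$, which means each block must itself have the full $\Theta(n)$-scale blow-up rather than a $\Theta(n/k)$ one; (5) therefore the right construction is $k$ blocks each of size $\Theta(n/k)$ but with the item-stack shared/arranged so that within a block an agent can be forced to an item at distance $\Omega(n)$ (not $\Omega(n/k)$) from it while OPT keeps it at $O(1)$ — this is plausible because $\ctopk(\Mopt_k|_S)$ for a small set $S$ is small, so the weight $w(S)$ can be as large as $|S|$ and distances can still be scaled up freely. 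Then $\ctopk(\text{ALG}) = \Omega(k \cdot n) \cdot \ctopk(\Mopt_k)$.

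The main obstacle is reconciling the two competing requirements: the denominator $\min_{\Mopt}\ctopk(\Mopt)$ must stay small (so OPT cannot "spread out" the cost — it must be able to simultaneously give every one of the $k$ chosen bad agents a cheap item), while the mechanism is forced to give each of those $k$ agents an expensive item, and "expensive" must scale like $n$, not like the block size. The trick, as in the sum lower bound, is that each block contributes a $\Theta(n/k)$-deep chain of merges, and the resulting representative's favorite-items list contains items that are placed at geometrically (or linearly) increasing distances so that the last agents absorbed into the cluster get matched, by worst-case tie-breaking, to items at distance $\Theta(n/k \cdot 1) = \Theta(n/k)$ — and then one stacks $k$ blocks whose bad agents' distances are made comparable across blocks, giving top-$k$ cost $\Theta(k)\cdot\Theta(n/k)\cdot(\text{ratio per agent})$. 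Getting the per-block ratio to be $\Theta(n/k)$ rather than $\Theta(\log(n/k))$ is exactly the content of the sum-distortion $\Omega(n^2)$ argument (which is $\Omega(n)$ per agent over $\Theta(n)$ agents), so I would lift that argument block-by-block: $k$ independent copies of the $\Omega((n/k)^2)$-sum-distortion gadget on $n/k$ agents each, then observe $\ctopk$ picks up the worst agent from each copy, yielding $k \cdot \Omega(n/k) = \Omega(n)$ per selected agent is wrong again — rather, $\ctopk$ sums $k$ agents each of cost $\Omega(n/k)$ relative to OPT-cost $\Theta(1)$ each and OPT's top-$k$ cost $\Theta(k)$... no: OPT's top-$k$ cost over $k$ agents each paying $\Theta(1)$ is $\Theta(k)$, while ALG's is $k\cdot\Theta(n/k)=\Theta(n)$, ratio $\Theta(n/k)$ — so I actually need OPT to pay $\Theta(1)$ total, i.e. $O(1/k)$ per agent, which forces the metric scaling. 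I will need to normalize distances so that OPT's top-$k$ cost is $\Theta(k)$ while ALG forces a single agent per block to cost $\Theta(n)$ — achievable by making, in each block, exactly one agent get catastrophically mismatched (distance $\Theta(n)$) while OPT mismatches no one (cost $O(1)$ each). That is the cleanest route and I expect the careful metric/preference construction ensuring the cascade reaches depth $\Theta(n/k)$ and the one bad agent per block pays $\Theta(n)$ to be where all the work lies.
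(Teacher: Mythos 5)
There is a genuine gap: the construction you actually land on does not yield the claimed $\Omega(kn)$ ratio. You correctly note midway that you ``need OPT to pay $\Theta(1)$ total, i.e.\ $O(1/k)$ per agent,'' but your next sentence reverses this and targets an optimal top-$k$ cost of $\Theta(k)$ (``OPT mismatches no one, cost $O(1)$ each''). With $k$ blocks, one catastrophically mismatched agent per block at distance $\Theta(n)$, and every agent paying $\Theta(1)$ under OPT, you get $\ctopk(\text{ALG}) = \Theta(kn)$ and $\ctopk(\Mopt_k) = \Theta(k)$, hence a ratio of only $\Theta(n)$ --- independent of $k$. To reach $\Omega(kn)$ you must simultaneously make the optimal top-$k$ cost be $O(1)$ (not $O(k)$), which means essentially all agents must have a free (zero-cost) option under OPT, while still engineering the adversarial cascade. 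That is exactly what the ``$k$ blocks of size $n/k$'' decomposition makes hard: shrinking the block shrinks the representative's reach, so it is not clear a single block of size $n/k$ can force a $\Theta(n)$-scale cost while remaining self-contained.

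The paper avoids the decomposition altogether and uses a single $k$-agnostic instance. It places collocated agent/item groups $A_t, B_t$ at positions $2^t - 1$ for $t = 1, \dots, \ell$ (sizes $2^{t-1}$), plus a lone agent at $0$ and a lone item at $-1$. The optimal matching pairs each $A_t$ with the collocated $B_t$, so all agents pay $0$ except the lone agent at $0$, which pays $1$; thus $\ctopk(\Mopt_k) = 1$ for every $k$. An adversarial sequence of merges cascades the lone agent's cluster through $A_1, A_2, \ldots$ until a single cluster represented by an agent in $A_\ell$ remains, and worst-case tie-breaking can then assign all $n/2$ agents of $A_\ell$ to items outside $B_\ell$, each at distance $\geq n/2$. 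The top-$k$ cost of the output is then $\min(k, n/2) \cdot \Omega(n)$, giving ratio $\Omega(kn)$ against the fixed denominator $1$. The two moves you need to internalize are (a) make $\Theta(n)$ agents catastrophic, not $k$, and (b) drive the optimal top-$k$ cost all the way to a constant by collocating agents with items, rather than merely making each optimal per-agent cost $O(1)$.
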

Note that our lower bound on the top-$k$-distortion immediately translates to a linear lower bound on the  max-distortion and a tight, quadratic lower bound on the sum-distortion.

\begin{corollary}\label{cor:mod-rep-match-lb-max-sum}
\Cref{alg:rep-match} (with worst-case tie-breaking) can produce a matching with worst-case max-distortion $\Omega(n)$ and worst-case sum-distortion $\Omega(n^2)$.
\end{corollary}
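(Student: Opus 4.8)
The plan is to exhibit, for each $n$ (a power of two, say) and each $1\le k\le n$, an instance $\sigma$ and a consistent metric $d$ on which RepMatch, resolving every tie adversarially, returns a matching $M$ with $\ctopk(M,d)=\Omega(kn)\cdot\ctopk(\Mopt_k,d)$. A single metric in fact suffices for all $k$ at once: since $\ctopk(\Mopt_k,d)\le\cost_n(M',d)$ for \emph{every} matching $M'$, it is enough to build $d$ so that (i) some matching has \emph{total} cost $O(1)$ (hence $\ctopk(\Mopt_k,d)=O(1)$ for all $k$), and (ii) RepMatch can be forced to match \emph{every} one of the $n$ agents at distance $\Omega(n)$ from its partner; then the $k$ largest entries of $M$ already sum to $\Omega(kn)$, for every $k$. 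Taking $k=1$ and $k=n$ then gives \Cref{cor:mod-rep-match-lb-max-sum}. (A simple line instance — $n$ items at the integers, $n$ agents at the half-integers — already realizes the $k=1$ case via a single chain of merges: the half-integer agents' nearest items are tied, so one chain of merges forms a size-$n$ cluster whose representative is the leftmost agent, which worst-case tie-breaking then matches to the rightmost item; the real difficulty is getting the extra factor of $n$ that the balanced regime demands.)

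The construction is hierarchical and mirrors the weight recursion in the proof of \Cref{thm:mod-rep-match-ub-topk}. The base gadget is the $n=2$ instance: on a line put $a_1=b_1$ at $0$, $a_2$ at $1$, $b_2$ at $2$; the optimum pairs $a_1\!\to\!b_1$, $a_2\!\to\!b_2$ at cost $1$, whereas RepMatch merges $\{a_1\},\{a_2\}$ through $b_1$ (using the tie in $a_2$'s list), keeps $a_1$ as representative, and can be made to output $a_1\!\to\!b_2$, $a_2\!\to\!b_1$ at cost $3$. To iterate, I would use $\log_2 n$ geometrically separated scales and arrange the agents and items so that the merges available to RepMatch form a \emph{balanced} binary tree of depth $\log_2 n$: at each level two equal-weight clusters are paired and merged, so the weight — and, by tightness of the invariant \eqref{eq:mod-rep-match-topk-inv2}, the representative-to-member distance — doubles per level and reaches $\Theta(n)$ at the root, while the optimal cost stays normalized to $O(1)$. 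Worst-case tie-breaking is exploited at exactly three places: to commit each agent's favorite to the intended item when its nearest items are tied; to keep, at every equal-size merge, the representative whose preference list points ``outward''; and finally to pick, among all perfect matchings of each terminal cluster onto its representative's favorite $|S|$ items, the one that sends every agent to a far item, for cost $\Theta(w(S))\cdot\OPT=\Theta(n)$ per agent.

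I expect the main obstacle to be the geometric realization: producing one pseudometric, consistent with the ordinal profile, in which (a) all $\log_2 n$ intended levels of merges are simultaneously \emph{available} — each merge of two size-$2^\ell$ clusters needs an item lying in both representatives' favorite-$2^\ell$ lists, which forces a carefully tuned amount of ``spill-over'' of one cluster's far items into a neighboring representative's prefix; (b) no \emph{unintended} merge is forced that would collapse the balanced tree into a cheap chain — here it helps that we only need \emph{some} execution of the ``while'' loop to produce the bad tree, so the merges may be serialized level by level; and (c) the optimal matching is still cheap despite the many overlapping favorite-lists. Reconciling (a) and (c) is the delicate point, since a cluster's representative must be close to the items that legalize its merges yet its cluster-mates must remain draggable a distance $\Omega(n)$ times the optimal scale; I anticipate handling this by placing every agent \emph{near but not on} its optimal item at the smallest scale, so that genuine preference ties are available at every level of the hierarchy.

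Once the instance is fixed, the remaining steps are largely bookkeeping: (1) prove by induction on levels that RepMatch under the prescribed tie-breaking builds the balanced merge tree with the designated representatives and terminal clusters; (2) read off $M$ and lower-bound $d(a_i,M(a_i))=\Omega(n)$ for all $i$ from the geometric separation of scales; (3) write down the within-group matching and bound its total cost by $O(1)$; and (4) combine to get $\ctopk(M,d)\ge k\cdot\Omega(n)$ and $\ctopk(\Mopt_k,d)=O(1)$, i.e.\ $\distortion_k(M,\sigma)=\Omega(kn)$.
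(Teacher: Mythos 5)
Your reduction of the corollary to the top-$k$ theorem (take $k=1$ for the max bound and $k=n$ for the sum bound) is exactly the paper's, and your high-level target is also the right one: build an instance where some perfect matching has top-$k$ cost $O(1)$ for \emph{every} $k$, while RepMatch can be coaxed into matching $\Omega(n)$ of the agents at distance $\Omega(n)$ each. However, the construction you propose — a balanced binary tree of equal-size merges across $\log_2 n$ geometrically separated scales, with the representative-to-member distance tracked by "tightness of the weight invariant" — is not what the paper does, and as you concede, you never actually realize it as a pseudometric. The paper's construction is a single explicit line instance and is considerably simpler: put $2^{t-1}$ agents \emph{and} $2^{t-1}$ items co-located at position $2^t-1$ for $t=1,\dots,\ell$ (with $n=2^\ell$), plus one agent at $0$ and one item at $-1$. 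The optimal matching then has cost exactly $1$ (only $a_0$ is ever nonzero), so $\ctopk(\Mopt_k)=1$ for all $k$. The intra-block merges are trivial (zero distance), and the interesting merges are a \emph{chain}, not a balanced tree: the accumulated cluster of size $2^t$ is repeatedly merged with the equal-size block $A_{t+1}$, using the fact that each $a_t$ is equidistant from $b_0$ and from any item in $B_{t+1}$ so that adversarial tie-breaking both legalizes the merge and hands the representative to $a_{t+1}$. After the domino chain, $a_\ell$ represents everyone, so RepMatch may output an arbitrary perfect matching that sends all $n/2$ agents of $A_\ell$ to items at distance $\geq n/2$, giving $\ctopk(M)\ge \min(k,n/2)\cdot n/2 = \Omega(kn)$.

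Two further points worth flagging. First, the plan to argue via "tightness of the invariant" is the wrong tool: equation~\eqref{eq:mod-rep-match-topk-inv2} is a one-sided accumulation of triangle inequalities and there is no reason to expect it can be made simultaneously tight across a whole merge tree; the paper's lower bound never touches the invariant, it reads distances directly off the explicit metric. (Had a tight balanced tree actually been achievable, the top-1 weight would triple per level and you would be claiming a max-distortion lower bound of $\Omega(n^{\log_2 3})$, matching the upper bound — a much stronger result than the $\Omega(n)$ actually proved, which should have been a warning sign.) Second, your diagnosis of the half-integer line instance is slightly off: the reason it fails for $k>1$ is not a missing "balanced-regime factor," but that every agent sits at distance $1/2$ from its optimal item, so $\ctopk(\Mopt_k)=\Theta(k)$ rather than $O(1)$, and the distortion stays at $\Theta(n)$ for all $k$. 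Co-locating agents with items — exactly what the paper does — is the fix, and it is also what resolves the tension you flag in point (c) of your sketch.
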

\begin{proof}[Proof of \Cref{thm:mod-rep-match-lb-topk}]
We will construct a set of agent preferences and a metric consistent with those preferences such that for our specified metric, the optimal perfect matching for the top-$k$ objective, $\Mopt_k$, satisfies $\cost_k(\Mopt_k) = 1$.
We will also show that \Cref{alg:rep-match} can return a matching $M$ such that $\cost_k(M) = \Omega(kn)$.
At a high level, the large distortion arises because inappropriate representatives could be chosen at various points of the mechanism, leading to a domino effect of unnecessary merges.
\paragraph{Construction.}
We construct a family of instances of $n = 2^{\ell} $ agents and items configured along the real line, where $\ell \in \mathbb Z^+$ is some natural number. 
Items are located at the points 
$$\{-1 \}  \cup  \{ 2^{t} - 1  : t \in \mathbb Z^+,  t \leq \ell   \}.$$
The items are partitioned into sets $B_0, B_1, \dots, B_{k}$. 
The set $B_{0}$ consists of a single item located at the point $-1$.  
For every positive integer $t \leq \ell$, the set $B_t$ consists of $2^{t-1}$ items located at the point $2^{t} - 1$. For each nonnegative integer $t$, we will frequently refer to a special item $b_t$ that belongs to the set $B_t$. 
Agents are located at the points 
$$\{0 \}  \cup  \{ 2^{t} - 1  : t \in \mathbb Z^+, t \leq \ell  \}.$$
The agents are partitioned into sets $A_0, A_1, \dots, A_{\ell}$. The set $A_{0}$ consists of a single agent located at the point $0$. 
For every positive integer $t \leq k$, the set $A_t$ consists of $2^{t-1}$ agents located at the point $2^{t} - 1$. For each nonnegative integer $t$, we will frequently refer to a special agent $a_t$ that belongs to the set $A_t$. Note that for every nonnegative integer $t \leq \ell - 1$, each agent $a \in A_{t}$ is equidistant from item $b_{0}$ and any item in $B_{t + 1}$. An example of the instance with $\ell = 3$ is shown in \Cref{fig:mod-rep-match-lb}. 

\begin{figure}[ht]
\centering
\begin{tikzpicture}[scale=0.75]
\tikzstyle{every node}=[draw=black,shape=circle, inner sep=2pt];

\pgfmathsetmacro{\k}{2.5}

\node (b0) at (\k,0) [fill=\itemcolor, color=\itemcolor, label=below:$B_0$] {};
\node (b1) at (3*\k ,0) [fill=\itemcolor, color=\itemcolor, label=below:$B_1$] {};
\node (b2) at (4*\k ,0) [fill=\itemcolor, color=\itemcolor, label=below:$B_2$] {};
\node (b3) at (5*\k ,0) [fill=\itemcolor, color=\itemcolor, label=below:$B_3$] {};

\node (a0) at (2*\k ,0) [fill=\agentcolor, color=\agentcolor, label=below:$A_0$] {};
\node (a1) at (3*\k ,1) [fill=\agentcolor,color=\agentcolor,label=above:$A_1$] {};
\node (a2) at (4*\k ,1) [fill=\agentcolor,color=\agentcolor,label=above:$A_2$] {};
\node (a3) at (5*\k ,1) [fill=\agentcolor, color=\agentcolor, label=above:$A_3$] {};

\path[every node/.style={font=\footnotesize}]
(a0) edge node[yshift=0.2cm] {1} (b0) 
(a0) edge node[yshift=0.2cm] {1} (b1) 
(b1) edge node[yshift=0.2cm] {2} (b2)
(a1) edge node[yshift=0.2cm] {} (b1) 
(a3) edge node[yshift=0.2cm] {} (b3) 
(b2) edge node[yshift=0.2cm] {4} (b3) 
(a2) edge node[yshift=0.2cm] {} (b2);
\end{tikzpicture}
\caption{The metric space that defines the preferences in our ordinal matching instance for $\ell= 3$. The red nodes are the agents and the black nodes are the items.}\label{fig:mod-rep-match-lb}
\end{figure}
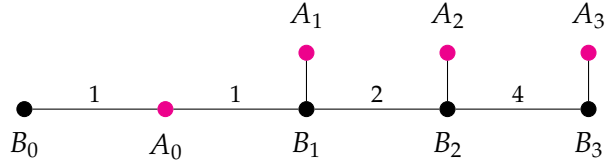

We next describe a set of agent preferences consistent with the metric outlined above. 
For each nonnegative integer $t$, the agents in $A_t$ will have identical preferences, so we focus on outlining the preferences for agents $ a_0, a_1, \dots, a_{\ell}$. 
Every agent will rank items based on distances given by the line metric, resolving ties in favor of items located to her right (i.e., those at a more positive location) and in favor of the special items $b_0, b_1, \dots b_{\ell}$. In particular, for every nonnegative integer $t \leq \ell - 1$ an agent $a_t \in A_t$ will prefer every item in $B_{t+1}$ over item $b_{0}$. Moreover, for every agent $a \in A$, their favorite item in the set $B_t$ will be the special item $b_t$. Note that by construction, the following claim is immediate.

\begin{claim}\label{clm:mod-rep-match-lb-common-favorite}
    For every nonnegative integer $t \leq \ell -1$, one of agent $a_t$'s favorite $\sum_{r = 0}^{t} |A_r|$ items is item $b_{t+1}$. 
\end{claim}

\paragraph{Distortion of \Cref{alg:rep-match}.} It is straightforward to verify that for the instance outlined above, the optimal assignment $\Mopt_1$ for the max-objective assigns agents in the set $A_{t}$ to items in the set $B_{t}$ for every positive integer $t$. Note that $\cost_k(\Mopt_k) = 1$. We will show that it is possible for the algorithm to return a matching $M$ such that for every agent $a \in A_{\ell}$, $M(a) \notin B_{\ell}$. This would imply that  $d(a, M(a)) \geq 2^{\ell-1} $ for every agent $a \in A_{\ell}$. Computing the top-$k$ cost, we have 
\begin{equation}\label{eq:mod-rep-match-lb-topk}
\cost_k(M) \geq  \min(k , |A_\ell|) \cdot 2^{\ell - 1}   \geq \min(k , |A_\ell|) \cdot (n/2)  . 
\end{equation}
Equation \eqref{eq:mod-rep-match-lb-topk} implies the stated distortion lower bound.
In the following claim, we show that the mechanism is indeed able to yield such a perfect matching.
\begin{claim}
For the constructed instance above, a worst-case operation of \Cref{alg:rep-match} could return an arbitrary perfect matching. 
\end{claim}
\begin{proof}
We consider a sequence of merges within \Cref{alg:rep-match} that consists of 2 phases. The first phase is as follows. For each positive integer $t \leq \ell$, repeatedly merge the sets that contain agents from $A_t$ until every agent in $A_t$ is represented by the agent $a_t$. Since the size of every set $A_t$ has cardinality equal to a power of $2$, such a sequence of merges is possible. 

After the first phase of merges is complete, the sets that remain are $A_0, A_1, \dots, A_{\ell}$. The set $A_0$ (with  size $|A_0| = 1$) is represented by agent $a_0$,  and for every positive integer $t \leq \ell$, the set $A_t$ (with size $|A_t| = 2^{t-1}$) is represented by agent $a_t$.

In the second phase, the domino effect kicks in. We first merge the sets represented by agents $a_0$ and $a_1$ and make the representative of the union be agent $a_1$. Then, we merge the sets represented by agents $a_1$ and $a_2$ and make the representative of the union be agent $a_2$, and so on. Such a sequence of merges is possible because: (1) one of agent $a_t$'s favorite $\sum_{r = 0}^{t} |A_r|$ items is the favorite item of agent $a_{t + 1}$ (by  \Cref{clm:mod-rep-match-lb-common-favorite}) and (2) at the time of merging sets represented by agents $a_t$ and $a_{t+1}$, the sets both have size equal to $2^{t}$, so we can make the new representative be agent $a_{t+1}$ and the resulting set will have size equal to $2^{t + 1}$.

Note that at the end of the second phase of merges, every agent becomes represented by the agent $a_{\ell}$. Thus, \Cref{alg:rep-match} could return an arbitrary perfect matching. 
\end{proof}
\end{proof}
\label{sec:mod-rep-match}

\section{Max-Distortion Lower Bounds}\label{sec:lower-bounds}
Finally, we discuss max-distortion lower bounds for arbitrary deterministic mechanisms.
We are only able to exhibit problem instances that guarantee constant distortion.
\begin{theorem}\label{thm:max-distortion-lower-bounds}
    Every deterministic matching mechanism has worst-case max-distortion at least $7$.
\end{theorem}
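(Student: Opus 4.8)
The plan is to exhibit a single ordinal instance $\sigma$, on a constant number of agents and items, with the property that \emph{every} perfect matching $M$ satisfies $\distortion_1(M,\sigma) \ge 7$. Since a deterministic mechanism must commit to one particular matching on input $\sigma$, this proves the theorem outright: whatever matching it returns, some metric consistent with $\sigma$ witnesses max-distortion $7$. The key simplification is that the witnessing metric may be chosen \emph{after} the matching, so we do not need a single metric that defeats all matchings at once --- only that no perfect matching stays within a factor $7$ of the optimum across all of $\rho(\sigma)$.

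For $\sigma$ I would use (a variant of) the binary-tree-metric family of \cite{Anari:2023aa}: a complete binary tree of small constant depth, with agents and items at the leaves and edge lengths growing by a factor of two toward the root, so that two leaves whose paths first separate $i$ levels above the leaf level lie at distance $2^i-1$ --- i.e. the scales an agent sees among the items are $1, 3, 7, \dots$ The preference lists are those induced by this metric, with the many ties (between items equidistant from a given agent) resolved adversarially. These are the same instances that give the $\Omega(\log n)$ sum-distortion lower bound of \cite{Anari:2023aa}, and the same ones that \Cref{prop:ub-tree} shows cannot be pushed past max-distortion $7$; intuitively, the geometric scaling means an agent is ``bumped'' to a coarser scale only once all the conflicts at finer scales have been forced, and three such bumps multiply the agent's cost by $1+2+4=7$. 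What makes the adversary's job possible is that these preference lists are consistent not only with the base tree metric but with a rich family of metrics obtained by perturbing it within the ordinal constraints it imposes; this freedom lets us tailor, to any output $M$, a consistent metric in which some agent is bumped a full three levels past where the optimum places it.

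The heart of the proof --- and the step I expect to be the main obstacle --- is the forcing claim: that no perfect matching escapes such a three-level overshoot. The configuration at the bottom of the tree, two sibling-leaf agents whose nearest item is the same, is exactly the two-agents/one-item gadget that forces one of them up a level and yields a factor-$3$ stretch; what must be shown is that feeding these ``bumped'' agents upward --- where, by a Hall-type deficiency count, a subtree again has one too few items available --- forces a second and then a third overshoot somewhere, so that in a carefully coordinated consistent metric some agent's matched item lands at distance exactly $1+2+4$ times the finest scale while the optimum, exhibited by an explicit near-local matching, still pays only the finest scale on that agent. Two points need care: coordinating \emph{which} member of the consistent family is used along the cascade, so that the triangle inequality and every fixed preference list hold simultaneously with the overshoot equal to $7$ (and not $3$, or $\infty$); and checking that this same metric genuinely admits a matching of cost only the finest scale, so the ratio is truly $7$. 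The companion bound \Cref{prop:ub-tree} shows $7$ is the most such instances can yield, and its proof is the natural template for calibrating the depth, the edge lengths, and the tie-breaks precisely enough to attain it.
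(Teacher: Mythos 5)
Your plan is in the right neighborhood --- a tree-based instance calibrated so that the maximum achievable overshoot is exactly $1+2+4=7$, with $\mathcal F$ a family of consistent metrics --- but it is not a proof: you explicitly flag the forcing claim (that every perfect matching must suffer a three-level overshoot in \emph{some} consistent metric) as an open obstacle and propose a Hall-type cascade to establish it, without actually carrying it out. That cascading argument is also strictly harder than what is needed, and I think it misses the structural trick that makes the paper's proof short.

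The paper's construction places the $n=8$ agents at the leaves of the tree and the $8$ items at the internal vertices, with one distinguished item $b_8$ at the root. Because there are exactly as many items as agents, \emph{every} perfect matching must assign some agent $a_i$ to the root item $b_8$; no Hall-type deficiency count or coordinated cascade of bumps is required to ``force'' anything. The family $\mathcal F$ then contains one metric $d_i$ per agent $a_i$, constructed so that $d_i(a_i, b_8) = 7$ while a perfect matching of cost exactly $1$ exists in $d_i$ (for instance, $d_1$ is realized by collapsing the tree onto a line so that $a_1$ is near $b_1$, $a_2$ is near $b_1, b_2$, etc., and $b_8$ sits at distance $7$ from $a_1$). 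The adversary simply waits to see which $a_i$ gets $b_8$ and then reveals $d_i$. Your proposal differs in two ways that both point away from this argument: you place items at the leaves (so there is no singleton root item that must be taken), and you use geometrically doubling edge weights (the paper uses arithmetic weights $2,3,4,5$ from root to leaves and all agent-item distances in each $d_i$ are $0$ or $1$ up to the single large $d_i(a_i,b_8)=7$). To repair your sketch you would at minimum need to replace the unproven cascade with the paper's counting observation --- or actually prove the cascade, which is the part you acknowledge you have not done.
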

Our proof of the above theorem can be found in \Cref{app:lower-bounds}. It is inspired by the approach of \cite{Anari:2023aa}, which was used to give  lower bound of $\Omega(\log n)$ for the sum-distortion. At a high level, their approach consists of two key ingredients: a set of preferences for $n = 2^{k}$ agents given by a tree metric and a family $\mathcal F$ of tree metrics consistent  with those preferences. Every metric  $d \in \mathcal F$ is such that $d(a, b) \in \{0, 1, 2\}$ for any agent $a$ and item $b$. 
Since we desire large max-distortion, we consider families of tree metrics with larger distances.

We now discuss in more detail the structure of the tree metrics that induce agent preferences. For $n \geq 4$, construct a balanced binary tree with the $n$ agents and $n-1$ items, such that the leaf nodes are the agents and the other nodes are items. Let $v$ be the root of this binary tree and let $\ell$ and $r$ be the left and right child respectively. 
Make the remaining item, denoted by $u$, be the parent of the root node $v$.
Note that the nodes $u, v, \ell , r$ are all items. Finally, add  non-negative weights to edges.  We call any set of preferences that arises from this procedure a \emph{tree instance}.

One can easily verify that tree instances in \cite{Anari:2023aa} and the proof of \Cref{thm:max-distortion-lower-bounds} are such that for every agent $a$ in the subtree with root $x \in \{\ell, r \}$, 
\begin{enumerate}[label=(\roman*)]
    \item Agent $a$ prefers each item in the same subtree, as well as item $v$, over item $u$.
    \item Agent $a$ prefers each item in the same subtree over any item in the subtree with root in the set $\{\ell, r \} \setminus \{x\}$. 
\end{enumerate}
In the following proposition, we show that any tree instance with the above properties does not rule out constant max-distortion.
\begin{proposition}\label{prop:ub-tree}
Suppose that preferences are induced by tree instances that satisfy properties (i) and (ii) outlined above.
For any consistent metric $d$, every perfect matching has max-distortion at most $7$.
\end{proposition}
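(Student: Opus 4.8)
I would first reduce the statement to a metric claim with no mention of matchings. For any perfect matching $M$ and consistent metric $d$ we have $\cost_1(M,d)=\max_i d(a_i,M(a_i))\le \max_{a\in A,\,b\in B}d(a,b)$, so $\distortion_1(M,d)\le \max_{a,b}d(a,b)/\OPT_1(d)$, and it suffices to prove
\[
\max_{a\in A,\ b\in B} d(a,b)\ \le\ 7\cdot\OPT_1(d).
\]
Write $D:=\OPT_1(d)$ (we may assume $D>0$, perturbing $0$-distances as noted in the preliminaries) and fix an optimal perfect matching $\Mopt$, so $d(a,\Mopt(a))\le D$ for every agent $a$ and $d\big(b,(\Mopt)^{-1}(b)\big)\le D$ for every item $b$. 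The strategy is to produce a single agent $z$ lying within $3D$ of \emph{every} item; then for any agent $a$, $d(a,z)\le d(a,\Mopt(a))+d(\Mopt(a),z)\le 4D$, and hence $d(a,b)\le d(a,z)+d(z,b)\le 7D$ for every item $b$, which is the desired bound.

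Before the case analysis I would record two consequences of (i)--(ii). Since every agent lies in the subtree rooted at $\ell$ or at $r$, property (i) forces \emph{every} agent to prefer $v$ over $u$, so $d(a,v)\le d(a,u)$ for all $a$; applying this to $U:=(\Mopt)^{-1}(u)$ gives $d(U,v)\le d(U,u)\le D$ and therefore $d(u,v)\le d(u,U)+d(U,v)\le 2D$. Also, by (i) an agent of $\ell$'s (resp.\ $r$'s) subtree prefers every item of its own subtree over $u$, and by (ii) it prefers every item of its own subtree over every item of the other subtree; hence whenever such an agent is matched by $\Mopt$ to $u$ or to an item of the opposite subtree, it lies within distance $D$ of every item of its own subtree.

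I would then split on where $V:=(\Mopt)^{-1}(v)$ and $U:=(\Mopt)^{-1}(u)$ sit, using the $\ell\leftrightarrow r$ symmetry of the construction. \emph{Case 1: $V,U$ lie in different subtrees}, say $V$ in $\ell$'s subtree and $U$ in $r$'s. Take $z=V$. For an item $b'$ of $\ell$'s subtree, property (i) gives $d(V,b')\le d(V,u)\le d(V,v)+d(v,u)\le 3D$; for an item $b''$ of $r$'s subtree, $d(V,b'')\le d(V,v)+d(v,U)+d(U,b'')\le D+D+D=3D$ using $d(U,b'')\le d(U,u)\le D$; and $d(V,v)\le D$, $d(V,u)\le 3D$. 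So $z=V$ is within $3D$ of all items. \emph{Case 2: $V,U$ lie in the same subtree}, say both in $\ell$'s. Then no agent of $r$'s subtree is matched to $v$ or $u$; since that subtree has exactly one more agent (leaf) than item (internal node), some agent $z$ of $r$'s subtree is matched to an item $b_0$ of $\ell$'s subtree. By (ii), $d(z,b'')\le d(z,b_0)\le D$ for every item $b''$ of $r$'s subtree; by (i) applied to $U$, $d(U,b_0)\le D$, $d(U,v)\le D$, $d(U,u)\le D$; hence $d(z,U)\le d(z,b_0)+d(b_0,U)\le 2D$ and so $d(z,b)\le d(z,U)+d(U,b)\le 3D$ for $b$ equal to $u$, $v$, or any item of $\ell$'s subtree. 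In both cases $z$ is within $3D$ of every item, completing the proof.

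The step I expect to be the main obstacle is Case 2: unlike Case 1, no single agent is forced to be close both to $\{u,v\}$ and to the opposite subtree, so one must instead extract the cross-matched agent $z$ from the leaf/internal-node imbalance of a binary tree and route its distances to $u$ and $v$ through $U$. The only other delicate point is constant bookkeeping: the worst chain is $a\to\Mopt(a)\to z\to b$ of lengths $D,3D,3D$, and each $3D$ decomposes as $D+2D$ with the $2D$ coming from $d(u,v)\le 2D$ --- so $7$ is exactly what this argument yields.
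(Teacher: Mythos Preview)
Your proof is correct and uses the same underlying ingredients as the paper's --- the leaf/internal-node imbalance that forces a cross-matched agent, property~(i) to bound distances from an agent to items in its own subtree and to $v$, and property~(ii) to compare across subtrees --- but the organization is different. You case-split on where $U=(\Mopt)^{-1}(u)$ and $V=(\Mopt)^{-1}(v)$ land and in each case exhibit an \emph{agent} $z$ within $3D$ of every item, concluding $d(a,b)\le D+3D+3D=7D$. The paper avoids the case split: it simply takes $a_\ell:=(\Mopt)^{-1}(u)$ (WLOG in the $\ell$-subtree), observes by pigeonhole that some agent $a_r$ of the $r$-subtree must be matched into $B_\ell\cup\{v\}$, and uses the \emph{item} $\Mopt(a_r)$ as hub to bound the item diameter by $6D$, giving $d(a,M(a))\le D+6D=7D$. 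The paper's version is a little shorter (no case distinction, and no need to separately track $V$); your version makes the role of the cross-matched agent more explicit and isolates the geometric content in the clean claim ``some agent lies within $3D$ of every item.'' Both decompositions hit the same constant $7$ for the same reason.
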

\begin{proof}
Let $\Mopt_1$ be the optimal perfect matching, given the metric $d$. Assume without loss of generality that $\Mopt_1(a_\ell) = u$ for some agent $a_\ell$ in the subtree with root $\ell$. Let $B_{\ell}$ denote the set of items in subtree with root $\ell$ (including $\ell$) and define $B_{r}$ analogously. 
Since $|B_{\ell} \cup \{ v \}| = n/ 2$ and there are $n/2 - 1$ other agents in the same subtree as agent $a_\ell$, there must be some agent $a_r$ in the subtree with root $r$ such that that $\Mopt_1(a_r) \in B_\ell \cup \{ v \}$. 
Property (i) tells us that for every item $b \in B_{\ell} \cup \{ v \}$, 
\begin{equation*}
    d(b, \Mopt_1(a_r) ) \leq d(b, a_\ell) + d(a_\ell, \Mopt_1(a_r)) \leq 2 d(a_\ell, u) \leq  2 \cdot \cost_1(\Mopt_1).
\end{equation*}
By property (ii), we also know that for every item  $b' \in B_{r}$ and every item $b \in B_{\ell}$, 
\begin{align*}
d(b', \Mopt_1(a_r)) 
&\leq d(b', a_r) + d(a_r,\Mopt_1(a_r)) \\
&\leq d(b, a_r) +  d(a_r,\Mopt_1(a_r))  \\
&\leq ( d(b, \Mopt_1(a_r)) + d(\Mopt_1(a_r), a_r ) ) +  d(a_r,\Mopt_1(a_r)) \\
&\leq 4 \cdot \cost_1(\Mopt_1).
\end{align*}
By the consequences of both properties, we find that the distance between any two items is at most $6 \cdot \cost_1(\Mopt_1)$.
Now, let $M$ be an arbitrary perfect matching. Observe that for any agent $a$,
\begin{equation*}
d(a, M(a)) \leq d(a, \Mopt_1(a) ) +  d(\Mopt_1(a), M(a)) \leq 7 \cdot \cost_1(\Mopt_1).
\end{equation*}
Hence, we have $\cost_1(M) \leq 7 \cdot \cost_1(\Mopt_1)$, which implies the stated distortion bound.
\end{proof}

\section{Discussion}
Since a mechanism's sum-distortion and max-distortion could differ by as much as a factor of $n$, we believe that it is worthwhile to study both. Note that significant improvements for one could have immediate consequences for the other, as well as the fairness ratio. Thus, a milestone on the path to answering open problems raised by \cite{DBLP:conf/ijcai/AnshelevichF0V21} and \cite{Anari:2023aa} is the following question.
\begin{openproblem}
What is the optimal max-distortion for deterministic mechanisms for metric matching with ordinal preferences?
\end{openproblem}
Our analysis of \Cref{alg:rep-match} mechanism shows that the optimal max-distortion is at most $O(n^{1.58})$.
However, it seems plausible to us that sublinear max-distortion is attainable and we  boldly conjecture that the optimal distortion is in fact a constant.

\begin{conjecture}
    There is a deterministic matching mechanism that guarantees max-distortion $O(1)$.
\end{conjecture}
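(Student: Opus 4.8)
The plan is to build a mechanism out of the two structural facts the paper has already isolated. By \Cref{prop:ub-tree}, on any ``tree instance'' satisfying properties (i)--(ii) \emph{every} perfect matching has max-distortion at most $7$, and by \Cref{thm:max-distortion-lower-bounds} the hardest instances known are precisely of this shape, which strongly suggests a recursive, tree-shaped mechanism. It is convenient to phrase the goal in threshold-graph terms: writing $r^* = \cost_1(\Mopt_1)$ for the optimal bottleneck cost, a perfect matching has max-distortion $\le \alpha$ exactly when it lives in the graph $G_{\alpha r^*}$ that joins $a_i$ to $b_j$ whenever $d(a_i,b_j)\le \alpha r^*$. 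Since $G_{r^*}$ has a perfect matching, Hall's condition holds there, and although the mechanism never learns $r^*$ or the metric, the preference profile does encode, for each agent, the order in which items enter her ball as its radius grows. The whole question is whether this ordinal information suffices to output a matching certifiably inside $G_{O(1)\cdot r^*}$.

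The core step I would aim to establish is a \textbf{balanced ordinal separation lemma}: from the profile alone, produce a partition of the agents into $A_1,A_2$ with sizes within one of $n/2$, a corresponding partition of the items into $B_1,B_2$, and a set of only $O(1)$ distinguished ``interface'' items, such that each agent of $A_1$ ranks every item of $B_1$ and every interface item ahead of every item of $B_2$ (and symmetrically) --- exactly mirroring properties (i)--(ii) of \Cref{prop:ub-tree} --- and such that $G_{r^*}$ still admits a near-optimal matching aligned with $(A_1\!\to\!B_1,\ A_2\!\to\!B_2)$ up to the interface. Given such a lemma one recurses on $(A_1,B_1)$ and $(A_2,B_2)$ and patches the handful of cross-matched agents through the interface items, using the triangle-inequality bookkeeping from the proof of \Cref{prop:ub-tree}, which bounds every pairwise item distance by $O(\cost_1(\Mopt_1))$, so routing a bounded number of agents through the interface is cheap. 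If each recursion level contributes only an \emph{additive} $O(r^*)$ to the bottleneck, the telescoped total is $O(r^*)$, i.e.\ constant max-distortion.

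The crux --- and the reason this is a conjecture and not a theorem --- is controlling that compounding across the $\Theta(\log n)$ levels. The lower-bound analysis for \Cref{alg:rep-match} is the cautionary tale: a merge there inflates the weight bound by a factor tied to $2$, and a chain of such merges is exactly the ``domino effect'' producing the $\Omega(n^2)$ sum and $\Omega(n)$ max lower bounds. To reach a constant one essentially cannot afford \emph{any} multiplicative loss per level, so the separation must be (near-)\emph{lossless}: the $O(1)$ interface items have to absorb the \emph{entire} cross-dependence between $A_1$ and $A_2$, at every level simultaneously. Whether such a balanced, lossless, ordinally-readable separation always exists in a general metric is precisely the open difficulty --- a single family of instances in which every balanced ordinal cut is forced to ``leak'' $\omega(1)$ agents across many levels would sink the approach. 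As checkpoints I would first prove the recursive mechanism attains max-distortion $O(1)$ on profiles consistent with \emph{some} tree metric (where balanced separators are readily available), and --- more modestly --- try to beat the current $O(n^{1.58})$ bound by a version of the lemma that tolerates a small but controlled loss per level; the full conjecture is then exactly the assertion that this loss can be driven to a constant.
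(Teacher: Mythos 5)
The statement is a \emph{conjecture}; the paper offers no proof, and neither do you --- your write-up is an honest research sketch that itself flags the key missing ingredient. That framing is correct, and your proposed program is a sensible reading of the paper's evidence (especially the fact that the $\Omega(\log n)$ tree instances top out at max-distortion $7$). But it is worth being precise about where the gaps are, because a couple of them are more severe than the ``compounding across levels'' issue you foreground.

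First, the \textbf{balanced ordinal separation lemma} as stated is likely false in general metrics. Properties (i)--(ii) of \Cref{prop:ub-tree} are consequences of a very rigid tree geometry; an arbitrary metric can have agent preferences that interleave items from any putative $B_1,B_2$ in ways that no $O(1)$-size interface can absorb (think of points on a cycle or in a high-dimensional grid). You acknowledge this as the crux, but the honest status is closer to ``there is no reason to believe this lemma holds'' than ``we need to drive a small loss to zero.'' Second, even granting the lemma, the inductive bookkeeping is not obviously additive in $r^* = \cost_1(\Mopt_1)$. After you fix $A_1\to B_1$, the induced subproblem on $(A_1,B_1)$ has its \emph{own} optimal bottleneck $r^*_1$, and there is no a priori bound $r^*_1 = O(r^*)$ unless the separation is aligned with the global optimum $\Mopt_1$ --- which is exactly what the mechanism cannot see. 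If the ordinal cut ``cuts'' $\Mopt_1$ in $\omega(1)$ places, some agent in $A_1$ may have lost her optimal partner entirely, and forcing her to match inside $B_1$ can be arbitrarily expensive; this is the same information-theoretic obstruction behind the $\Omega(\log n)$ sum lower bound. Third, the argument in \Cref{prop:ub-tree} that bounds all inter-item distances by $O(r^*)$ relies on a pigeonhole fact (there are $n/2-1$ agents and $n/2$ items in one side plus $v$, so $\Mopt_1$ \emph{must} cross), which a generic balanced partition does not enjoy; if $\Mopt_1$ does not cross your separation, the two sides can be far apart and the interface gives you nothing, while if it does cross, you need the crossing to be detectable ordinally. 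Your suggested checkpoints --- first prove $O(1)$ for profiles induced by tree metrics, then try to beat $O(n^{1.58})$ with a lossy version of the lemma --- are reasonable next steps, and the tree-metric special case in particular looks genuinely tractable given \Cref{prop:ub-tree}; but the general conjecture remains wide open, as the paper states.
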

As a first step towards resolving this conjecture, we suggest the following weaker problem. Suppose that we know that each item is distance at most $1$ to \textit{some} unique agent. For each item, can we find an agent whose distance is at most $O(1)$? If the answer is no, this would immediately imply a super-constant lower bound for max-distortion. If the answer is yes, then perhaps one can use this knowledge to construct better mechanisms.

Another interesting direction relates to  truthfulness. The Serial Dictatorship has exponentially large max-distortion (see \Cref{app:other-mechanisms} for more details), but are there other truthful deterministic mechanisms with better distortion guarantees?

\section*{Acknowledgments}
JH is supported by the Simons Foundation Collaboration on the Theory of Algorithmic Fairness and the Simons Foundation investigators award 689988. PR is
supported by Moses Charikar’s Simons Investigator Award and Li-Yang Tan’s NSF awards 1942123,
2211237, 2224246, Sloan Research Fellowship, and Google Research Scholar Award.

\bibliography{papers,ref}
\bibliographystyle{alpha}

\appendix

\section{Missing Details from \texorpdfstring{\Cref{sec:max-vs-sum}}{Section 3}}\label{app:max-vs-sum}
\begin{proof}[Proof of \Cref{prop:small-sum-large-max}] We construct a family of instances of $n \geq 2$ agents and items configured along the real line, parametrized by a some constant $k > 1$. 
Agents are located at the points 
$$ \{ (k + 1) t : t \in \mathbb Z^+, t \leq n \}.$$
For each positive integer $t \leq n$, let $a_t$ be located at the point $(k+1)t$.
Items are located at the points 
$$ \{ 1 + (k + 1) t : t \in \mathbb Z^+, t \leq n  \}.$$
For each positive integer $t \leq n$, let $b_t$ be located at the point $1 + (k+1)t$.

An example of the instance with $n = 3$ is shown in \Cref{fig:small-sum-large-max}.  
\begin{figure}[ht]
\centering
\begin{tikzpicture}[scale=0.75]
\tikzstyle{every node}=[draw=black,shape=circle, inner sep=2pt];

\pgfmathsetmacro{\k}{2}

\node (a1) at (\k +1,0) [fill=\agentcolor, color=\agentcolor, label=below:$a_1$] {};
\node (a2) at (2*\k + 2,0) [fill=\agentcolor,color=\agentcolor,label=below:$a_2$] {};
\node (a3) at (3*\k + 3,0) [fill=\agentcolor,color=\agentcolor,label=below:$a_3$] {};

\node (b1) at (1 + \k + 1,0) [fill=\itemcolor, color=\itemcolor, label=below:$b_1$] {};
\node (b2) at (1 + 2*\k + 2,0) [fill=\itemcolor, color=\itemcolor, label=below:$b_2$] {};
\node (b3) at (1 + 3*\k + 3,0) [fill=\itemcolor, color=\itemcolor, label=below:$b_3$] {};

\path[every node/.style={font=\footnotesize}]
(a1) edge node[yshift=0.2cm] {1} (b1) 
(b1) edge node[yshift=0.2cm] {$k$} (a2) 
(a2) edge node[yshift=0.2cm] {1} (b2)
(b2) edge node[yshift=0.2cm] {$k$} (a3) 
(a3) edge node[yshift=0.2cm] {1} (b3);
\end{tikzpicture}
\caption{The metric space that defines the preferences in our ordinal matching instance for $n = 3$. The red nodes are the agents and the black nodes are the items.}\label{fig:small-sum-large-max}
\end{figure}
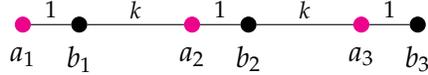

It is straightforward to verify that
the optimal assignment for the max-objective, $\Mopt_1$, satisfies $\cost_1(\Mopt_1) = 1$,
and the optimal assignment for the sum-objective, $\Mopt_n$, satisfies $\cost_n(\Mopt_n) = n$;
take the matching that assigns each agent $a_t$ to item $b_t$.
Consider the matching $M$ that assigns agent $a_t$ to item $b_{t-1}$ for each integer $t$ such that $2 \leq t \leq n$, and assigns agent $a_1$ to item $b_n$. Note that $d(a_t, M(a_t)) = k$ for every integer $t$ such that $2 \leq t \leq n$,  and $d(a_1, M(a_1)) = (n-1)(k+1) + 1$. Thus, we have 
\begin{align*}
\cost_n(M) 
&= (n - 1)k + ((n-1)(k+1) + 1) \\
&= n(2k + 1) - 2k \\
&= nD,
\end{align*}
where $D = (2k + 1) -2k/n$. Note that a simple computation implies that $k = n(D-1)/ (2(n-1))$.
We can also show that 
\begin{align*}
\cost_1(M) 
=  ((n-1)k + n) 
= k(n -1) + n 
= \frac{n (D + 1)}{2}. 
\end{align*} 
\end{proof}

\begin{proof}[Proof of \Cref{prop:small-max-large-sum}] We construct a family of instances of $n$ agents and items configured along a polygon with $n + 1$ vertices, parametrized by some constant $D > 1$. 
The vertices of the polygon (in clockwise order) consist of agents $a_1, a_2 \dots, a_{n}$ followed by item $b_1$. For each positive integer $t \leq n - 1$, the edge $(a_t, a_{t+1})$ has length $D$.
The edge $(a_n, b_1)$  has length $D$ and the edge $(a_1, b_1)$ has length $1$. For every positive integer $t$ such that $2 \leq t \leq n$, the item $b_t$ and agent $a_t$ are collocated. We define the distance between any two vertices to be the length of the shortest path (along the polygon) between them. An example of the instance with $n = 3$ is shown in \Cref{fig:small-max-large-sum}.

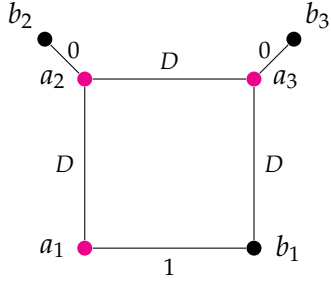
\begin{figure}[ht]
\centering
\begin{tikzpicture}[scale=0.75]

    \pgfmathsetmacro{\sidelength}{3}
    \pgfmathsetmacro{\radiallength}{1}

    \tikzstyle{every node}=[shape=circle, inner sep=2pt]

    \node (a1) at (0,0) [fill=\agentcolor, color=\agentcolor, label=left:$a_1$] {};
    \node (b1) at (\sidelength, 0) [fill=\itemcolor, color=\itemcolor, label=right:$b_1$] {};
    \node (a3) at (\sidelength, \sidelength) [fill=\agentcolor, color=\agentcolor, label=right:$a_3$] {};
    \node (a2) at (0, \sidelength) [fill=\agentcolor, color=\agentcolor, label= left:$a_2$] {};

    \node (b3) at ([shift=(45:\radiallength)]a3)  [fill=\itemcolor, color=\itemcolor, label=above right:$b_3$] {};
    \node (b2) at ([shift=(135:\radiallength)]a2) [fill=\itemcolor, color=\itemcolor, label=above left:$b_2$] {};

    \path[every node/.style={font=\footnotesize, shape=rectangle, draw=none}]
        (a1) edge node[below] {$1$} (b1)
        (b1) edge node[right] {$D$} (a3)
        (a3) edge node[above] {$D$} (a2)
        (a2) edge node[left]  {$D$} (a1)
        
        (a3) edge node[xshift=-.12cm, yshift=.12cm] {0} (b3)
        (a2) edge node[xshift=.12cm, yshift=.12cm] {0} (b2);

\end{tikzpicture}
\caption{The metric space that defines the preferences in our ordinal matching instance for $n = 3$. The red nodes are the agents and the black nodes are the items.}\label{fig:small-max-large-sum}
\end{figure}

It is straightforward to verify that
the optimal assignment for the max-objective, $\Mopt_1$, satisfies $\cost_1(\Mopt_1) = 1$,
and the optimal assignment for the sum-objective, $\Mopt_n$, satisfies $\cost_n(\Mopt_n) = 1$; take the matching that assigns agent $a_t$ to item $b_t$.
Consider the matching $M$ that assigns agent $a_t$ to item $b_{t+1}$ for each positive integer $t \leq n - 1$, and assigns agent $a_n$ to item $b_1$. It is straightforward to verify that $d(a_t, M(a_t)) = D$ for every positive integer $t \leq n$. Therefore, we have $\cost_1(M) = D$ and $\cost_n(M) = nD$. \qedhere
\end{proof}

\begin{proof}[Proof of \Cref{prop:distortion-fairness-bounds}]

Fix a matching $M$ and metric $d$. Let $\Mopt_1$, $\Mopt_k$ and $\Mopt_n$ be the optimal perfect matchings for the max-objective, top-$k$ objective and the sum-objective respectively.
By definition of the top-$k$ norm, the following is true for every positive integer $k \leq n$, 
\begin{equation}\label{eq:bounds-norms}
    \cost_1(M) \leq \cost_k(M) \leq \cost_n(M) \leq n \cdot  \cost_1(M).
\end{equation}
Suppose that $\cost_1(M) \leq D_1 \cdot \cost_1(\Mopt_1)$ for some $D_1 \in \mathbb R$. We repeatedly apply \eqref{eq:bounds-norms} to get 
\begin{align*}
\cost_k(M) 
&\leq n \cdot \cost_1(M) \\
&\leq nD_1 \cdot \cost_1(\Mopt_1)  \\
&\leq nD_1 \cdot \cost_1(\Mopt_k)  \tag{by optimality of $\Mopt_1$}  \\
&\leq n D_1 \cdot \cost_k(\Mopt_k) 
\end{align*}
Suppose also $\cost_n(M) \leq D_n \cdot \cost_n(\Mopt_n)$ for some $D_n \in \mathbb R$. Similarly, we can show that 
\begin{align*}
\cost_k(M) 
&\leq \cost_n(M) \\
&\leq D_n \cdot \cost_n(\Mopt_n) \\
&\leq D_n \cdot \cost_n(\Mopt_1) \tag{by optimality of $\Mopt_n$} \\
&\leq nD_n \cdot \cost_1(\Mopt_1)  \\
&\leq nD_n \cdot \cost_1(\Mopt_k)  \tag{by optimality of $\Mopt_1$} \\
&\leq n D_n \cdot \cost_k(\Mopt_k) 
\end{align*}
It follows that for every positive integer $k \leq n$, 
\begin{equation*}
    \cost_k(M) \leq  \min ( nD_1,  nD_n  ) \cdot \cost_k(\Mopt_k),
\end{equation*}
which implies that $\fairness(M, d) \leq \min (nD_1, nD_n)$. Our desired result immediately follows.
\end{proof}

\section{Missing Details from \texorpdfstring{\Cref{sec:mod-rep-match}}{Section 4.2}}\label{app:rep-match}
\begin{claim}\label{clm:property-of-logartihms}
If $f(x) = x^{\log_2 3}$ then   $2f(x) + f(y) \leq f(x +y)$ for all $0 < x \leq y$.
\end{claim}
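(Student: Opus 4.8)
The statement to prove is: if $f(x) = x^c$ with $c = \log_2 3$, then $2f(x) + f(y) \leq f(x+y)$ for all $0 < x \leq y$. The plan is to reduce to a single-variable inequality by scaling, and then exploit the specific value $c = \log_2 3$, which is exactly what makes the constant $2$ (rather than $1$) work.

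First I would normalize. Since both sides are homogeneous of degree $c$, dividing through by $y^c$ and setting $t = x/y \in (0, 1]$ reduces the claim to showing $g(t) := (1+t)^c - t^c - 2 \geq 0$ for all $t \in (0,1]$. I would check the endpoint $t = 1$ first: there $g(1) = 2^c - 1 - 2 = 3 - 3 = 0$, using $2^{\log_2 3} = 3$. So the inequality is tight at $x = y$, and the real content is that $g$ is nonincreasing on $(0,1]$ (or at least stays nonnegative), so that $g(t) \geq g(1) = 0$.

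To establish monotonicity, I would differentiate: $g'(t) = c(1+t)^{c-1} - c t^{c-1} = c\big((1+t)^{c-1} - t^{c-1}\big)$. Since $c = \log_2 3 \approx 1.585 > 1$, we have $c - 1 > 0$, so $u \mapsto u^{c-1}$ is increasing, hence $(1+t)^{c-1} > t^{c-1}$ and $g'(t) > 0$ on $(0,1]$. Wait — that shows $g$ is \emph{increasing}, so $g(t) \leq g(1) = 0$, which is the wrong direction. This tells me the naive reduction is backwards: I should instead divide by $x^c$. Setting $s = y/x \geq 1$, the claim becomes $h(s) := (1+s)^c - s^c - 2 \geq 0$ for $s \geq 1$. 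Now $h(1) = 2^c - 1 - 2 = 0$ again, and $h'(s) = c\big((1+s)^{c-1} - s^{c-1}\big) > 0$ for $s \geq 1$ since $c - 1 > 0$. Hence $h$ is increasing on $[1, \infty)$, so $h(s) \geq h(1) = 0$ for all $s \geq 1$, which is exactly what we need.

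The main obstacle — really the only subtlety — is getting the direction of the scaling right: one must divide by $x^c$ (the smaller variable) rather than $y^c$, so that the free variable ranges over $[1,\infty)$ where the derivative argument closes in the favorable direction. The role of $c = \log_2 3$ is pinned down by the identity $2^c = 3$, which forces $h(1) = 0$; for any $c' < \log_2 3$ the inequality would fail near $s = 1$, and for $c' > \log_2 3$ it would hold with slack, so this is the threshold exponent. Once the reduction and this identity are in hand, the proof is just the observation that $u \mapsto u^{c-1}$ is increasing on $\mathbb{R}_{>0}$.
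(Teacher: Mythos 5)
Your final argument is correct, and it is genuinely different from the paper's. You divide by $x^c$, set $s = y/x \geq 1$, and show $h(s) = (1+s)^c - s^c - 2$ satisfies $h(1) = 0$ and $h'(s) = c\left((1+s)^{c-1} - s^{c-1}\right) > 0$ because $c - 1 = \log_2(3/2) > 0$; monotonicity then gives $h(s) \geq 0$ on $[1,\infty)$. The paper instead normalizes by $y^c$, obtaining $g(t) = (t+1)^c - 2t^c - 1$ on $t \in [0,1]$, and because $g$ is \emph{not} monotone there (it rises from $g(0)=0$ and falls back to $g(1)=0$), the paper has to go to the second derivative: $g''(t) = c(c-1)\left((t+1)^{c-2} - t^{c-2}\right) < 0$ (using $2-c>0$), so $g$ is concave and nonnegative between its two zeros. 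Your choice of normalization lands on the half-line where the function \emph{is} monotone, so a one-derivative argument suffices; that is a genuine simplification.

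One caution about the narrative in the first half: after dividing by $y^c$ with $t = x/y$, the correct reduced inequality is $(1+t)^c - 2t^c - 1 \geq 0$, not $(1+t)^c - t^c - 2 \geq 0$ as you wrote (the factor $2$ sits on $t^c$, not on the constant). Your computation of $g'(t) = c(1+t)^{c-1} - c t^{c-1}$ and the conclusion that ``$g$ is increasing, wrong direction'' therefore applies to the wrong function. Had you carried the correct $g$, you would have found $g'$ changes sign on $(0,1)$ (positive near $0$, negative at $1$), so that normalization fails for a monotonicity argument not because of a direction issue but because $g$ is genuinely non-monotone --- which is precisely what pushes the paper to concavity. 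Your pivot to dividing by $x^c$ was the right move, but the reasoning that motivated it was off; worth fixing if this is to stand as a self-contained proof.
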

\begin{proof}[Proof]
It suffices to show that the function $g(t) = (t +1)^c - 2t^c - 1$, where $c = \log_2 3$, is nonnegative for $t \in [0, 1]$. Indeed, when $t = x/y$, we have
\begin{align*}
0
&\leq y^c \cdot g(x/y)  \\
&= y^c \cdot \left ( (x/y + 1)^c - 2(x/y)^c - 1 \right )  \\
&= (x + y)^c - 2x^c - y^c \\
&= f(x + y) - 2f(x) - f(y).
\end{align*}
We now show that the function $g$ has the desired property. Taking derivatives, we have 
\begin{align*}
g'(t) 
&= c (t + 1)^{c-1} - 2 c t^{c-1} \\
g''(t)
&= c(c-1) (t + 1)^{c-2} - 2(c)(c-1) t^{c-2} \\
&= c(c-1) \left ( (t+1)^{c-2} - t^{c-2} \right).
\end{align*}
Note that for $t \in (0, 1)$, we have  $0 < \frac{t}{t+1} < 1$ and thus, $0 < (\frac{t}{t+1})^{2-c} < 1$. This in turn implies that $(t+1)^{c-2} - t^{c-2} < 0$ and thus, $g''(t) < 0$.  Hence, the function $g$ is strictly concave for every $t \in (0, 1)$. Since  $g(0) = 0$ and $g(1) = 0$, it follows that $g(t) \geq 0$ for every $t \in [0, 1]$, as desired. 
\end{proof}

\begin{remark}
Our choice of function $f$ in the proof of Theorem \ref{thm:mod-rep-match-ub-topk} is asymptotically optimal in the following sense. If $f$ is a non-decreasing function such that $f(1) = 1 $ and $2f(x) + f(y) \leq f(x + y)$ for all $0 < x \leq y$, then $f(n) \geq n^{\log_2 3}$ for infinitely many positive integers $n$.
\end{remark}

\section{Missing Details from \texorpdfstring{\Cref{sec:lower-bounds}}{Section 5}}\label{app:lower-bounds}
\begin{proof}[Proof of \Cref{thm:max-distortion-lower-bounds}]
Our approach is to construct a set of preferences for $n = 8$ agents and a family $\mathcal F$ of metrics consistent with those preferences. 
Regardless of how a matching $M$ assigns items to agents, there will always be a metric $d \in \mathcal F$ such that there  is an agent $a$ such that $d(a, M(a)) = 7$ which makes $\cost_1(M) \geq 7$, although the optimal matching $\Mopt_1$ for that metric will have cost $1$.

\paragraph{Agent Preferences.}
We define a metric space that makes it easy to determine the agent preferences. 
The metric space is defined in terms of an a weighted undirected tree with $8$ leaves\footnote{We define a \emph{leaf} to be a vertex with no children. An \emph{internal vertex} is a vertex with at least one child. Note that the root is not a leaf although it has degree $1$.}, such that the root has one child, and every other internal vertex has two children. Note that this graph has $8$ internal vertices. Agents will be located at the internal vertices and agents will be located at the leaves.
The weight of an edge $(u, v)$ for which vertex $u$ is the parent of vertex $v$ is $t + 1$, where $t$ is the number of edges along the path from vertex $u$ to the root. See \Cref{fig:lower-bound-pref} for a helpful diagram.

\begin{figure}[ht]
\centering
\begin{tikzpicture}[scale=0.75]
\tikzstyle{every node}=[draw,shape=circle, inner sep=2pt];
\node (a1) at (2,0) [fill=\agentcolor,color=\agentcolor,label=below:$a_1$] {};
\node (a2) at (4,0) [fill=\agentcolor,color=\agentcolor,label=below:$a_2$] {};
\node (a3) at (6,0) [fill=\agentcolor,color=\agentcolor,label=below:$a_3$] {};
\node (a4) at (8,0) [fill=\agentcolor,color=\agentcolor,label=below:$a_4$] {};
\node (a5) at (10,0) [fill=\agentcolor,color=\agentcolor,label=below:$a_5$] {};
\node (a6) at (12,0) [fill=\agentcolor,color=\agentcolor,label=below:$a_6$] {};
\node (a7) at (14,0) [fill=\agentcolor,color=\agentcolor,label=below:$a_7$] {};
\node (a8) at (16,0) [fill=\agentcolor,color=\agentcolor,label=below:$a_8$] {};
\node (b1) at (3,1) [fill=\itemcolor,color=\itemcolor,label=above left:$b_1$] {};
\node (b2) at (5,2) [fill=\itemcolor,color=\itemcolor,label=above left:$b_2$] {};
\node (b3) at (7,1) [fill=\itemcolor,color=\itemcolor,label=above right:$b_3$] {};
\node (b4) at (9,3) [fill=\itemcolor,color=\itemcolor,label=above left:$b_4$] {};
\node (b5) at (11,1) [fill=\itemcolor,color=\itemcolor,label=above left:$b_5$] {};
\node (b6) at (13,2) [fill=\itemcolor,color=\itemcolor,label=above right:$b_6$] {};
\node (b7) at (15,1) [fill=\itemcolor,color=\itemcolor,label=above right:$b_7$] {};
\node (b8) at (9,5) [fill=\itemcolor,color=\itemcolor,label=above left:$b_8$] {};
\path[every node/.style={font=\footnotesize}]
(a1) edge node[xshift=-.11cm, yshift=.11cm] {5} (b1) 
(a2) edge node[xshift=.11cm, yshift=.11cm] {5} (b1)
(a3) edge node[xshift=-.11cm, yshift=.11cm] {5} (b3)
(a4) edge node[xshift=.11cm, yshift=.11cm] {5} (b3)
(a5) edge node[xshift=-.11cm, yshift=.11cm] {5} (b5)
(a6) edge node[xshift=.11cm, yshift=.11cm] {5} (b5)
(a7) edge node[xshift=-.11cm, yshift=.11cm] {5} (b7)
(a8) edge node[xshift=.11cm, yshift=.11cm] {5} (b7)
(b1) edge node[xshift=-.11cm, yshift=.11cm] {4} (b2)
(b3) edge node[xshift=.11cm, yshift=.11cm] {4} (b2)
(b5) edge node[xshift=-.11cm, yshift=.11cm] {4} (b6)
(b7) edge node[xshift=.11cm, yshift=.11cm] {4} (b6)
(b2) edge node[xshift=-.11cm, yshift=.11cm] {3} (b4)
(b6) edge node[xshift=.11cm, yshift=.11cm] {3} (b4)
(b4) edge node[xshift=.11cm] {2} (b8);
\end{tikzpicture}
\caption{The metric space that defines the preferences in our ordinal matching instance for $n = 8$. $b_8$ is the root of the tree. The black nodes are the items and the red (leaf) nodes are the agents.}\label{fig:lower-bound-pref}
\end{figure}
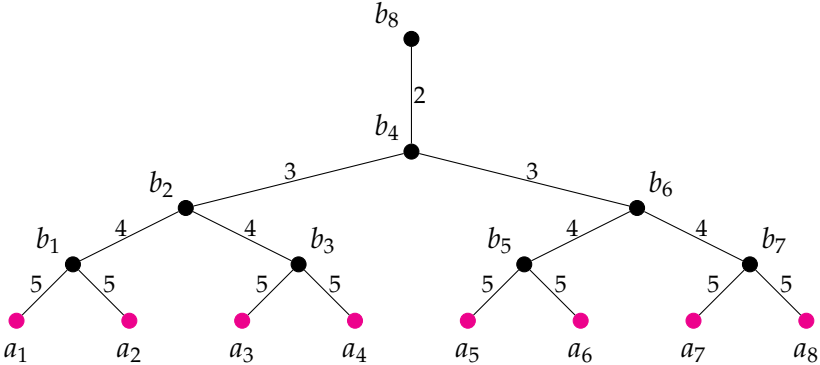

\paragraph{Consistent Metrics.}
We construct a family $\mathcal F$ of $8$ consistent metrics. Each agent $a_i$ will be associated with a unique metric $d_i$ in the family. The description of each metric is rather involved, but we provide the metric $d_1$ associated with the left-most leaf $a_1$ in \Cref{fig:lower-bound-hard}. By symmetry, one can readily construct the remaining  metrics. 

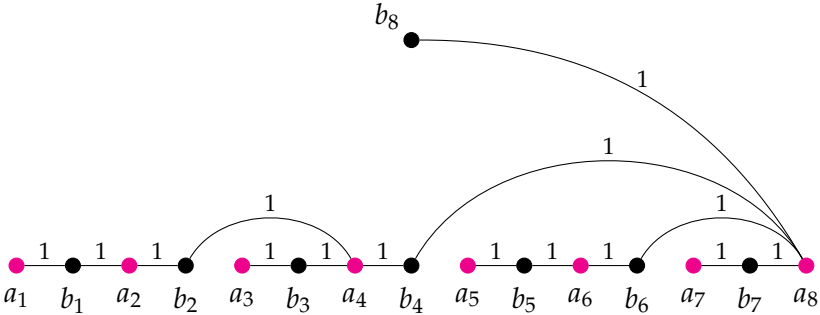
\begin{figure}[ht]
\centering
\begin{tikzpicture}[scale=0.75]
\tikzstyle{every node}=[draw,shape=circle, inner sep=2pt];
\node (a1) at (2,0) [fill=\agentcolor,color=\agentcolor,label=below:$a_1$] {};
\node (a2) at (4,0) [fill=\agentcolor,color=\agentcolor,label=below:$a_2$] {};
\node (a3) at (6,0) [fill=\agentcolor,color=\agentcolor,label=below:$a_3$] {};
\node (a4) at (8,0) [fill=\agentcolor,color=\agentcolor,label=below:$a_4$] {};
\node (a5) at (10,0) [fill=\agentcolor,color=\agentcolor,label=below:$a_5$] {};
\node (a6) at (12,0) [fill=\agentcolor,color=\agentcolor,label=below:$a_6$] {};
\node (a7) at (14,0) [fill=\agentcolor,color=\agentcolor,label=below:$a_7$] {};
\node (a8) at (16,0) [fill=\agentcolor,color=\agentcolor,label=below:$a_8$] {};
\node (b1) at (3,0) [fill=\itemcolor,color=\itemcolor,label=below:$b_1$] {};
\node (b2) at (5,0) [fill=\itemcolor,color=\itemcolor,label=below:$b_2$] {};
\node (b3) at (7,0) [fill=\itemcolor,color=\itemcolor,label=below:$b_3$] {};
\node (b4) at (9,0) [fill=\itemcolor,color=\itemcolor,label=below:$b_4$] {};
\node (b5) at (11,0) [fill=\itemcolor,color=\itemcolor,label=below:$b_5$] {};
\node (b6) at (13,0) [fill=\itemcolor,color=\itemcolor,label=below:$b_6$] {};
\node (b7) at (15,0) [fill=\itemcolor,color=\itemcolor,label=below:$b_7$] {};
\node (b8) at (9,4) [fill=\itemcolor,color=\itemcolor,label=above left:$b_8$] {};
\path[every node/.style={font=\footnotesize}]
(a1) edge node[yshift=.2cm] {1} (b1) 
(a2) edge node[yshift=.2cm] {1} (b1)
(a2) edge node[yshift=.2cm] {1} (b2)
(a3) edge node[yshift=.2cm] {1} (b3)

(a4) edge node[yshift=.2cm] {1} (b3)
(a4) edge[bend right=60] node[yshift=.2cm] {1} (b2)
(a4) edge node[yshift=.2cm] {1} (b4)

(a5) edge node[yshift=.2cm] {1} (b5)

(a6) edge node[ yshift=.2cm] {1} (b5)
(a6) edge node[yshift=.2cm] {1} (b6)

(a7) edge node[yshift=.2cm] {1} (b7)

(a8) edge[bend right] node[yshift=.2cm] {1} (b8)
(a8) edge[bend right=60] node[yshift=.2cm] {1} (b4)
(a8) edge[bend right=60] node[yshift=.2cm] {1} (b6)
(a8) edge node[yshift=.2cm] {1} (b7);

\end{tikzpicture}
\caption{The metric space that defines the preferences in the consistent metric $d_1$ for $n = 8$. $b_8$ is the root of the tree. The black nodes are the items and the red nodes are the agents.}\label{fig:lower-bound-hard}
\end{figure}

\paragraph{Establishing the Lower Bound.} Note that any perfect matching $M$ must assign some agent $a_i$ to the item $b_8$. 
It is straightforward to verify that for each metric $d_i$, the optimal matching has cost $1$ and the distance between agent $a_i$ and the root is $7$. For example, in the metric $d_1$ depicted in \Cref{fig:lower-bound-hard}, the matching that assigns each agent $a_i$ to item $b_i$ is optimal and we also have $d_1(a_1, b_8) = 7$. Therefore the matching $M$ will have distortion 7 in the metric space associated with $d_i$.
\end{proof}

\section{Max-Distortion of Other Mechanisms}\label{app:other-mechanisms}
We now briefly discuss how  other deterministic mechanisms fare under the max-distortion lens. 
We note that the sum-distortion of these mechanisms has already been studied. It turns out that many of the examples used to show lower bounds for the sum-distortion also provide similar lower bounds for the max-distortion  because most of the mechanism's cost is incurred by a single agent. For completeness, we provide full descriptions below.

We begin with the Serial Dictatorship, a well-known truthful mechanism for assignment problems. In a fixed an order over the agents, it assigns each agent her favorite item that is not yet assigned. The sum-distortion of this mechanism was shown by \cite{CFRF+16} to be exponential. We show that this is also the case for the max-distortion. 
\begin{proposition}\label{prop:serial-dictator-lb}
The Serial Dictatorship has worst-case max-distortion $\Omega(2^{n})$.
\end{proposition}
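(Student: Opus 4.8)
The plan is to re-use essentially the classical construction of \cite{CFRF+16} that gives Serial Dictatorship exponential \emph{sum}-distortion, and to observe that in it all of the cost is concentrated on a single agent, so that the same family of instances witnesses exponential \emph{max}-distortion as well.

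Concretely, I would place all $2n$ points on the real line, in the order $b_1, a_1, b_2, a_2, \dots, b_n, a_n$, with item $b_i$ at position $2^i - 2$ and agent $a_i$ at position $2^i - 1$ (so $b_1$ sits at the origin). To avoid distance ties one can instead shift $b_1$ to $-\epsilon$ for an arbitrarily small $\epsilon>0$; I will describe the $\epsilon = 0$ version, with ties broken toward $b_{i+1}$, and take the bound in the limit. The only features of this metric that I need are: (a) $d(a_i, b_i) = 1$ for every $i$; (b) $d(a_i, b_1) = d(a_i, b_{i+1}) = 2^i - 1$; and (c) for every $j > i+1$, $d(a_i, b_j) = 2^j - 2^i - 1 > 2^i - 1$.

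Given this instance the argument has two parts. First, the matching $a_i \mapsto b_i$ has $\cost_1 = 1$ by (a), and it is optimal since $b_1$ lies at distance at least $1$ from every agent, so $\cost_1(\OPT) = 1$. Second, I would show by induction on $i$ that when Serial Dictatorship (processing $a_1, \dots, a_n$ in order) reaches $a_i$, the items $b_2, \dots, b_i$ have already been claimed by $a_1, \dots, a_{i-1}$, so the available items are exactly $\{b_1\} \cup \{b_{i+1}, \dots, b_n\}$; by (b) and (c), $a_i$'s favorite among these is $b_{i+1}$, hence $a_i \mapsto b_{i+1}$. Consequently the last agent $a_n$ is forced onto the lone remaining item $b_1$, paying $d(a_n, b_1) = 2^n - 1$. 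Therefore $\cost_1$ of the output is $2^n - 1$ while the optimum is $1$, giving max-distortion $\ge 2^n - 1 = \Omega(2^n)$.

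The main work, such as it is, lies in pinning down the geometry so that properties (a)--(c) hold simultaneously: each agent must sit at distance exactly $1$ from its ``correct'' item $b_i$ (to keep $\OPT$ cheap) while the leftover item $b_1$ and the next item $b_{i+1}$ are equidistant from it (so that the greedy choice walks one index further each round and the cost doubles), and the geometric placement $b_i = 2^i-2$, $a_i = 2^i-1$ is exactly what achieves this. The accompanying routine checks --- that the induction goes through (no agent ever has a closer \emph{available} item than $b_{i+1}$, crucially using that $b_2,\dots,b_i$ are already gone) and that no matching beats $\cost_1 = 1$ --- are straightforward.
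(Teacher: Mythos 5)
Your proof is correct and follows essentially the same approach as the paper's: both reuse the exponentially spaced line instance from \cite{CFRF+16} and observe that Serial Dictatorship cascades, forcing the last agent onto the far-away leftover item so that a single agent absorbs all of the cost while the optimum has max cost $\approx 1$. The only differences are cosmetic details of the placements (you interleave $a_i$ at $2^i-1$ and $b_i$ at $2^i-2$, whereas the paper co-locates agents and items at the powers of $2$ with a special agent at $1$ and item at $-\epsilon$), but the induction, the tie-breaking device, and the resulting $\Omega(2^n)$ bound are the same.
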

\begin{proof}
We revisit the instance used by \cite{CFRF+16} to study the sum-distortion of the Serial Dictatorship. This instance consists of $n$ agents and items configured along the real line, parametrized by a positive constant $\epsilon < 1$. Agents are located at the points 
$$
\{ 1 \} \cup  \{ 2^{t} : t \in \mathbb Z^+, t \leq n-1 \}.
$$
There is one agent at each of these $t$ locations. Agents are assigned items in order of their distance to $0$ (the closest goes first). Items are located at the points 
$$
\{ - \epsilon \}  \cup \{ 2^{t} : t \in \mathbb Z^+, t \leq n-1 \}.
$$ there is one item at each of these $t$ locations.

It is straightforward to verify that the optimal matching $\Mopt_1$ for the max-objective satisfies $\cost_1(\Mopt_1) = 1 + \epsilon$. Note however, that the Serial Dictatorship will assign the agent at $2^{n-1}$ to the item at $-\epsilon$ due to a domino effect, resulting in a matching $M$ with $\cost_1(M) = 2^{n-1} + \epsilon$. This implies the stated distortion bound. 
\end{proof}
\begin{remark}
The Serial Dictatorship belongs to a class of truthful algorithms that \cite{Anari:2023aa} call \emph{serializable}.  
It is not hard to show that this class similarly has exponential max-distortion.
\end{remark}

The Boston mechanism \cite{AS03} is another well-known deterministic mechanism and is commonly used in the school choice literature. The mechanism takes place over a series of rounds. In the first round, each agent proposes to her favorite item. Then, each item that receives a proposal is irrevocably matched to one its proposers and rejects all others. In the second round, the agents that failed to get matched then propose to their second favorite items, and the mechanism continues in the same way until all agents are matched. For simplicity, we will say that choices are made according to a fixed priority order, but this can be extended to the other settings (e.g., arbitrary tie-breaking). The sum-distortion of this mechanism was shown by \cite{Anari:2023aa} to be essentially exponential. We show that this remains the case for the max-distortion.

\begin{proposition}\label{prop:boston-mechanism-lb}
The Boston mechanism has worst-case max-distortion $\Omega(2^{\sqrt{n}})$.
\end{proposition}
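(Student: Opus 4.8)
The plan is to mirror the approach used for the Serial Dictatorship in \Cref{prop:serial-dictator-lb}: reuse the instance that \cite{Anari:2023aa} employed to prove an essentially-exponential lower bound on the \emph{sum}-distortion of the Boston mechanism, and observe that in that instance the bad matching produced by the mechanism dumps almost all of its cost onto a single unfortunate agent, so the max-cost is within a constant factor of the sum-cost of that matching while the optimal max-cost stays bounded. Concretely, I would recall that the Anari--Charikar--Ramakrishnan construction arranges roughly $\sqrt{n}$ ``levels'', each contributing a geometric doubling, so that the Boston mechanism, proceeding round by round under the fixed priority order, triggers a cascade of rejections in which one agent is repeatedly bumped down her preference list until she is finally matched to an item at distance $\Omega(2^{\sqrt n})$, whereas the optimal matching pairs each agent to a nearby item at cost $O(1)$.

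The key steps, in order, are: (1) state the instance precisely --- $n$ agents and $n$ items placed on the line (or the appropriate simple metric) in geometrically spaced groups indexed by a level parameter up to roughly $\sqrt n$, together with the priority order over agents; (2) verify the preference lists are consistent with the metric, i.e. that each agent's ranking agrees with the distances, which is the routine check that the group spacing dominates within-group perturbations; (3) exhibit the optimal matching for the max-objective and check $\cost_1(\Mopt_1) = O(1)$ by pairing each agent with the item collocated or $\epsilon$-close to her; (4) trace the execution of the Boston mechanism on this instance, showing inductively over the rounds that at each round exactly one agent gets pushed to the next level, so that after $\Theta(\sqrt n)$ rounds some agent $a$ is matched to an item at distance $2^{\Theta(\sqrt n)}$; (5) conclude $\cost_1(M) \geq d(a, M(a)) = \Omega(2^{\sqrt n})$ and divide.

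The main obstacle is step (4): faithfully reconstructing the round-by-round dynamics of the Boston mechanism on the (somewhat intricate) nested construction of \cite{Anari:2023aa} and proving, by induction on the round number, that the ``domino'' of rejections propagates all the way through the $\Theta(\sqrt n)$ levels rather than stalling --- this requires being careful about which agent wins each contested item under the priority order and checking that the rejected agent's next-preferred available item is always the one at the next level up. Steps (1)--(3) and (5) are bookkeeping. One subtlety worth flagging: since \cite{Anari:2023aa} proved the sum-distortion bound is $\Omega(2^{\sqrt n}/\sqrt n)$ or similar (``essentially exponential''), I should double-check whether the single-agent cost in their instance is genuinely $\Omega(2^{\sqrt n})$ or only $\Omega(2^{\sqrt n}/\mathrm{poly})$; if the latter, the max-distortion bound I can claim is correspondingly $\Omega(2^{\sqrt n}/\mathrm{poly}(n))$, and I would state the proposition with that caveat (the abstract's ``$\Omega(2^{\sqrt n})$'' then being understood up to the same polynomial slack, as in \cite{Anari:2023aa}).
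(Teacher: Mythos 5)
Your proposal matches the paper's proof essentially step for step: reuse the Anari--Charikar--Ramakrishnan instance for the Boston mechanism, trace the domino of rejections, and note that the max-cost of the resulting matching is $2^{k-1}+\epsilon$ for some agent while $\cost_1(\Mopt_1)=1+\epsilon$, with $n=\Theta(k^2)$. The subtlety you flag in step (5) resolves in your favor: the optimal \emph{max}-cost is $O(1)$ (unlike the optimal sum-cost, which is $\Theta(n)$), so dividing the single bad agent's cost $2^{\Theta(\sqrt n)}$ by $O(1)$ yields a clean $\Omega(2^{\sqrt n})$ with no polynomial slack, which is in fact why the max-distortion bound here is slightly cleaner than the sum-distortion bound it is adapted from.
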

\begin{proof}
We revisit the instance used by \cite{Anari:2023aa} to study the sum-distortion of the Boston mechanism. The agent and items locations are the similar to those in the proof of \Cref{prop:serial-dictator-lb}, but with duplication. This time, there are $n = 1 + k(k-1)/2$ agents and items. For each positive integer $t \leq k -1 $ there are $t$ agents and $t$ items at the point $2^t$. There is one agent at 1, and one item at $-\epsilon$. Co-located agents have identical preferences. Agents have priority in order of their distance to 0.

Like the Serial Dictatorship, the Boston mechanism suffers from a domino effect.
In the first round, for every $t \geq 2$, the agents at $2^t$ all propose to the same item at $2^t$, and that item is assigned to one of these agents. 
Meanwhile, both the agents at $1$ and $2$ propose to the item at $2$, and the item will be assigned to the agent at $1$. In the second round, for every $t \geq 3$ the agents at $2^t$, the unmatched agents at $2^t$ all propose to the same item at $2^t$, and that item is assigned to one of those agents. Meanwhile, both the unmatched agents at $2$ and $4$ propose to the item at $4$, and the item will be assigned to the agent at $2$. A similar situation occurs on subsequent steps.

Again, it is straightforward to verify that the optimal matching $\Mopt_1$ for the max-objective satisfies $\cost_1(\Mopt_1) = 1 + \epsilon$. It is also not hard to see that the mechanism returns a matching $M$ with $\cost_1(M) = 2^{k-1} + \epsilon $. Since $n = \Theta(k^2)$, the mechanism attains our stated distortion bound.
\end{proof}

To summarize, we present all known results on the distortion and fairness ratio of deterministic mechanisms in \Cref{tab:distortion-det-mechanisms}.
\begin{table}[!htb]
\centering

\renewcommand{\arraystretch}{1.5} 
\setlength{\tabcolsep}{10pt}  

\begin{tabular}{llll}
\toprule
\textbf{Mechanisms} & \textbf{Max} & \textbf{Sum} &\textbf{Fairness Ratio} \\
\midrule
Serial Dictatorship & $\Omega(2^{n})$ & $\Omega(2^{n})$ \cite{Anari:2023aa}& $\Omega(2^{n})$  \\
Boston   & $\Omega(2^{\sqrt{n}})$&  $\Omega(2^{\sqrt{n}})$ \cite{Anari:2023aa} & $\Omega(2^{\sqrt{n}})$   \\
\Cref{alg:rep-match} (worst-case operation)  & $\Omega(n); O(n^{1.58}) $ &  $\Theta(n^2) $ &  $O(n^2)$ \\
\bottomrule
\end{tabular}

\caption{Distortion and Fairness Ratio of Deterministic Mechanisms.}
\label{tab:distortion-det-mechanisms}
\end{table}

\end{document}